\newcommand{\ie}{\textit{i.e.}\ }
\newcommand{\kl}[1]{\mathcal{#1}}
\newcommand{\N}{\mathbbm N}
\newcommand{\Z}{\mathbbm Z}
\newcommand{\deux}{\mathbbm2}
\newcommand{\et}{\textrm{ and }}
\newcommand{\co}[2]{\left\llbracket #1,#2\right\llbracket}
\newcommand{\cc}[2]{\left\llbracket #1,#2\right\rrbracket}
\newcommand{\oc}[2]{\left\rrbracket #1,#2\right\rrbracket}
\newcommand{\oo}[2]{\left\rrbracket #1,#2\right\llbracket}
\newcommand{\scc}[2]{_{\cc{#1}{#2}}}
\newcommand{\sco}[2]{_{\co{#1}{#2}}}
\newcommand{\sett}[2]{\left\{\left.#1\vphantom{#2}\right|#2\right\}}
\newcommand{\set}[3]{\sett{#1\in#2}{#3}}
\newcommand{\card}[1]{\left|#1\right|}
\newcommand{\length}[1]{\left|#1\right|}
\newcommand{\restr}[1]{_{\left|#1\right.}}
\newcommand{\lang}{\kl L}
\newcommand{\dinf}[1]{\vphantom{#1}^\infty{#1}^\infty}
\newcommand{\pb}[2]{\begin{quote}{\normalfont\textbf{Instance:}} #1.\\{\normalfont\textbf{Question:}} #2?\end{quote}}
\newcommand{\p}{\kl P}
\begin{document}

\title{Revisiting the Rice Theorem of Cellular Automata}

\author[cmm]{P. Guillon}{Pierre Guillon}
\address[cmm]{DIM - CMM, UMI CNRS 2807, Universidad de Chile\\
Av. Blanco Encalada 2120,
8370459 Santiago, Chile}
\email{pguillon@dim.uchile.cl}

\author[caen]{G. Richard}{Gaétan Richard}
\address[caen]{Greyc\\
Université de Caen \& CNRS \\
boulevard du Maréchal Juin, 14\,000 Caen, France}
\email{grichard@info.unicaen.fr}

\thanks{Thanks to the Projet Blanc ANR {\it Sycomore} and Program {\it ECOS-Sud}}

\keywords{cellular automata, undecidability}
\subjclass{F.1.1 Models of Computation}

\begin{abstract}
  \noindent A cellular automaton is a parallel synchronous computing model, which consists in a juxtaposition of finite automata whose state evolves according to that of their neighbors. It induces a dynamical system on the set of configurations, \ie the infinite sequences of cell states. The limit set of the cellular automaton is the set of configurations which can be reached arbitrarily late in the evolution.
  In this paper, we prove that all properties of limit sets of cellular automata with binary-state cells are undecidable, except surjectivity. This is a refinement of the classical ``Rice Theorem'' that Kari proved on cellular automata with arbitrary state sets.
\end{abstract}

\maketitle

\section*{Introduction}

Among all results on undecidability, Rice's Theorem~\cite{rice-org} is probably one of the most important. It can be seen as stating the following: for any property on the functions computed by Turing machines, the set of corresponding machines is either trivial or undecidable. Following Church-Turing thesis, it is often thought that this result should remain true for other computational systems. It has, for instance, been extended with various restrictions to general dynamical systems~\cite{tileshifttm}, tilings~\cite{cplxtile} or, in a weaker form~\cite{tilings}.

In this paper, we shall focus on a specific model known as cellular automata, introduced by Von~Neumann~\cite{neumann}. Cellular automata are made of infinitely many cells endowed with a finite state and interacting locally and synchronously with each other. As this system does not have any way to give output, study of dynamics often uses the limit set, that consists of configurations which can appear arbitrarily late~\cite{limit1,limit2}. In this domain, Jarkko Kari has already proved an equivalent of Rice theorem~\cite{rice} for limit set. A similar, ``perpendicular'', result is also known for the trace, which consists on the evolution of only one fixed cell~\cite{trice}.

On the other hand, it is known that the property of being surjective (\textit{i.e.} having a full limit set) is decidable and not trivial for one-dimensional cellular automata~\cite{injdec}. Such a statement is not contradictory with Kari's theorem since it is not, properly speaking, a property of the limit set: a surjective CA can have the same limit set than a non-surjective one if the alphabets are distinct. Nevertheless, when fixing the alphabet, surjectivity becomes a property of the limit set. This leads to the question whether there exist other decidable properties on limit sets of cellular automata with fixed alphabet~\cite{jarkko,open}.

In this paper, we shall answer negatively to this question by extending the result of Kari when fixing the number of states, showing that all properties on limit sets other than surjectivity are either trivial or undecidable. Note that surjectivity is undecidable for higher dimensional cellular automata~\cite{revsurj}. Our proofs use borders (example of similar constructions can be found in~\cite{varouchas,trace,parsim}). The idea here is to restrict our study to nonsurjective cellular automata, since surjectivity is decidable
. The (computable) forbidden words of the image can be used as border words.

The paper is organised as follows: first we give all the necessary definitions in Section~\ref{sec:defs} and some first properties of limit sets in Section~\ref{sec:prel}. After that, we detail the core encoding of our construction in Section~\ref{sec:simul}. With all this, we state the main Rice theorem in Section~\ref{sec:Rice} before giving some concluding remarks in Section~\ref{sec:conc}.

\section{Definitions}
\label{sec:defs}

We denote the set with two elements as $\deux=\{0,1\}$.  For any alphabet $B$, we denote as $B^\Z$ the set of \emph{configurations} (all bi-infinite sequences over $B$).  The length of some word $u\in B^*$ will be noted $\length u$.  A \emph{uniform} word or configuration is one where a single letter appears, with repetitions.  For any configuration $x\in B^\Z$ or any word $x\in B^*$, $l,k \in \Z$, $x \sco lk$ denotes the finite pattern $x_{l} x_{l+1} \ldots x_{k-1}$. This notation is extended to the case where $l$ or $k$ is infinite.

A \emph{cylinder} is the subset of configurations $[u]_i = \{ x \in B^\Z \mid x \sco i{i+\length u} = u \}$ sharing the common pattern $u \in B^*$ at position $i \in \Z$. Similarly, if $E\subset B^k$ for some $k\in\N\setminus\{0\}$, then $[E]_i$ will stand for the set of configurations $x\in B^\Z$ such that $x\sco i{i+k}\in E$.

The set of configurations $B^\Z$ is a compact metric space when endowing it with the metric induced by the Cartesian product of the discrete topology on $B$. In this setting, open sets correspond to unions of cylinders.

If $b\in B$, then we note $\dinf b$ the configuration consisting in a uniform bi-infinite sequence of $b$. If $E\subset B^k$, with $k\in\N\setminus\{0\}$, then we will represent the set of bi-infinite sequences of words of $E$ as $\dinf E=\set x{B^\Z}{\exists i<k,\forall j\in\Z,x\sco{i+jk}{i+(j+1)k}\in E}$.

The following definition will be very helpful for future constructions: it allows to build borders so that some particular nonoverlapping zones of configurations can be recognized.
\begin{definition}
  Let $B$ be an alphabet and $n\in\N\setminus\{0\}$. A \emph{strongly freezing word} $u \in B^n$ is a word such that for all $i \in \co1n$, $uB^i \cap B^iu = \emptyset$. Equivalently, $[u]_0\cap[u]_i=\emptyset$.\\
A set $E \subset B^n$ is \emph{strongly freezing} if for all $i \in \co1n$, $EB^i \cap B^iE = \emptyset$.
\end{definition}
One first remark is that any word $u$ can be extended to some strongly freezing word: simply take $ub^k$, where $b\ne u_0$ and $k\in\N\setminus\{0\}$ such that $b^k$ does not appear in $u$.

The \emph{shift} $\sigma: B^\Z \to B^\Z$ is defined for all $x \in B^\Z$ and $i \in \Z$ by $\sigma(x)_i=x_{i+1}$.

A \emph{subshift} $\Sigma$ is a closed subset of $B^\Z$ which is strongly invariant by shift, \ie $\sigma(\Sigma)=\Sigma$.
Equivalently, a subshift can be defined as the set of configurations avoiding a particular set $L\subset B^*$ of finite patterns, called \emph{forbidden language}: $\set x{B^\Z}{\forall i\in\Z,\forall u\in L,x\sco i{i+\length u}\ne u}$.

If the forbidden language $L$ can be taken finite, then we say that $\Sigma$ is \emph{of finite type}; if it is empty it is the \emph{full shift}. A subshift of finite type has \emph{order} $k\in\N\setminus\{0\}$ if it admits a forbidden language $L\subset B^k$ containing only words of length $k$ --- or equivalently, of length at most $k$.

For any subshift $\Sigma$ and $-\infty\le l\le m\le+\infty$, denote $\lang\scc lm(\Sigma)=\sett{x\scc lm}{x\in\Sigma}$. Note that, when $l-m$ is finite, it only depends on this difference, justifying the definition $\lang_k(\Sigma)=\lang\sco0k(\Sigma)$ for $k\in\N$. We note $\lang(\Sigma)=\bigcup_{k\in\N}\lang_k(\Sigma)=\sett{x\scc lm}{x\in\Sigma,l,m\in\Z}$ the language of the subshift $\Sigma$.


\begin{definition}
  A (one-dimensional) \emph{cellular automaton} is a triplet $(B,r,f)$ where $B$ is a finite \emph{alphabet} (or state set), $r\in\N$ is the neighborhood \emph{radius} and $f: B^{2r+1} \to B$ is the \emph{local transition function}.

A cellular automaton acts on elements of $B^\Z$ (called \emph{configurations}) by synchronous and uniform application of the local transition function, inducing the \emph{global transition function} $F: B^\Z \to B^\Z$, formally defined for all $x \in B^\Z$ and $i\in\Z$ by $F(x)_i = f(x_{i-r},x_{i-r+1},\ldots,x_{i+r})$.
We will assimilate the cellular automaton with its global function.
  \end{definition}

It is easy to see that any cellular automaton commutes with the shift. In a more general way, Curtis, Hedlund and Lyndon proved that cellular automata correspond exactly to continuous self-maps of $B^\Z$ which commute whith the shift~\cite{hedlund}.

Note that a local rule $f:B^{2r+1}\to B$ can be extended in $f:B^*\to B^*$ by $f(u)=f(u\sco0{2r+1})\ldots f(u\sco{\length u-2r-1}{\length u})$.

A \emph{partial cellular automaton} is the restriction of the global function of some cellular automaton to some subshift of finite type $\Sigma$. Note that it can be defined from an alphabet $B$, a radius $r\in\N$ and a local rule $f:\lang_{2r+1}(\Sigma)\to B$.

For a cellular automaton $(B,r,f)$, a state $b \in B$ is said to be \emph{quiescent} if $f(b^{2r+1})=b$. It is said to be \emph{spreading} if $f(u)=b$ whenever the letter $b$ appears in the word $u$.

Note that if $F$ is a cellular automaton on alphabet $B$, then $F(B^\Z)$ is a subshift. In particular, either $F$ is surjective, or $F(B^\Z)$ admits (at least) a forbidden pattern.
It is easy to see that if $j\in\N\setminus\{0\}$, then $F^j$ is also a cellular automaton, and the subshift $F^j(B^\Z)$ is included in $F^{j-1}(B^\Z)$.

The evolution being parallel and synchronous, we can see that the image of any uniform configuration remains uniform. The set of uniform configuration is then a finite subsystem, with an ultimate period $p\le\card B$. In particular, $F^p$ admits some quiescent state.

\begin{definition}
  The \emph{limit set} of a cellular automaton $F$ is the set \[ \Omega_F= \bigcap_{j \in \N} F^j(B^\Z) \] of the configurations that can be reached arbitrarily late.
\end{definition}
From the remark above, the limit set of the cellular automaton $F$ always contains (at least) one uniform configuration. It is closed, and strongly invariant by $F$. More precisely, the restriction of $F$ over $\Omega_F$ is its maximal surjective subsystem. In particular, $F$ is surjective if and only if $\Omega_F=B^\Z$.

Moreover, it can be seen from the definition that $\Omega_F=\Omega_{F^k}$ : the configurations that can be reached arbitrarily late are the same.

\section{Preliminary results}
\label{sec:prel}


In this section, we shall recall some classical results that will be needed in the proof.

The ``Firing Squad'' is a problem on algorithmics over cellular automata, introduced in 1964 by Moore and Myhill in~\cite{fsquad}: the goal is to synchronize cells of arbitrarily wide zones so that they all take the same given state at the same time. It led to different solutions (see~\cite{Firing}); when dealing with infinitely many cells, we obtain that it is possible to make them get this state arbitrarily late in time, and Kari's theorem was the first extrinsic purpose to this construction; we will reuse it as it was claimed.
\begin{prop}[\cite{rice}]\label{p:fsquad}
 There exist some cellular automaton $S$ on some alphabet $B$ and some states $\kappa,\gamma\in B$, with $\kappa$ spreading, such that:
 \begin{enumerate}
 \item For any $J\in\N$, there is some configuration $z\in B^\Z$ such that $F^J(z)=\dinf\gamma$ and $\forall i\in\Z,j<J,F^j(z)_i\ne\gamma$;
 \item $\Omega_S\cap[\gamma]_0\subset\{\kappa,\gamma\}^\Z$.
 \end{enumerate}
\end{prop}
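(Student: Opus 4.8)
The plan is to take any classical minimal-time solution of the Firing Squad Synchronization Problem as a black box and to package it into a cellular automaton $S$, writing $F$ for its global function, with two distinguished states: a firing state $\gamma$, which I make quiescent, and a spreading error state $\kappa$. On a finite segment delimited by walls (equivalently by two adjacent general states) the classical solution brings every cell to a common segment-wide pre-fire pattern simultaneously and, one step later, to $\gamma$; the walls are arranged to fire to $\gamma$ as well. The crucial design choice is that the local rule outputs $\kappa$ on every neighbourhood that is neither all-$\gamma$, nor a genuine pre-fire pattern, nor a legitimate working pattern of the underlying solution. In particular any neighbourhood mixing $\gamma$, or a partial pre-fire pattern, with an ordinary working state is declared inconsistent and yields $\kappa$, and $\kappa$ is spreading as required.

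For the first item, given $J$ I would take $z$ periodic: tile $\Z$ by identical segments delimited by generals, choosing the common length (and, if parity requires it, prepending a signal that travels to the general before the protocol starts) so that synchronization occurs at time exactly $J$. Since the segments are translates of one another, every cell reaches $\gamma$ simultaneously, giving $F^J(z)=\dinf\gamma$; and since $\gamma$ is produced only at the firing instant, no cell carries $\gamma$ before time $J$, which is exactly the first-firing requirement.

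The substance is the second item. I would first record the purely local fact that in $F(B^\Z)$, hence in $\Omega_S$, the state $\gamma$ is never immediately adjacent to a working state (any state outside $\{\kappa,\gamma\}$): if a cell fires then its whole neighbourhood was in the pre-fire or all-$\gamma$ pattern, so by the overlap of neighbourhoods an immediate neighbour must either fire as well or fall into an inconsistent neighbourhood, which by design outputs $\kappa$. Consequently, in any $y\in\Omega_S$ a working state can be separated from a $\gamma$ only by a nonempty block of $\kappa$'s. Assume for contradiction that $y\in\Omega_S$ has $y_0=\gamma$ and carries a working state, say on the right, and consider the block of $\kappa$'s just to the left of the first such working state. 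As $y\in\Omega_S$ it admits preimages $y^{(j)}\in\Omega_S$ of every order, and I track this $\kappa$-block backwards. Because $\kappa$ is spreading, a block of $\kappa$'s of width $w$ has as image a block of width $w+2r$, so when the $\kappa$'s of $y$ come from spreading the corresponding block in $y^{(1)}$ is narrower by $2r$; and when a $\kappa$ of $y$ instead comes from an inconsistent neighbourhood, $y^{(1)}$ already exhibits a $\gamma$ (or a partial pre-fire pattern) abutting a working state, a pattern we just proved forbidden in $F(B^\Z)$, contradicting $y^{(1)}\in\Omega_S$. Either way, the separating block shrinks by $2r$ at each backward step and, after finitely many steps, vanishes in some $y^{(j^*)}\in\Omega_S$, which then shows $\gamma$ directly adjacent to a working state, again impossible. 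Hence $y$ has no working state on the right, and symmetrically on the left, so $y\in\{\kappa,\gamma\}^\Z$, which is the claim $\Omega_S\cap[\gamma]_0\subset\{\kappa,\gamma\}^\Z$.

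I expect the main obstacle to be precisely this backward analysis, and more fundamentally making the clause \emph{inconsistent neighbourhood $\to\kappa$} simultaneously compatible with three demands: it must not perturb the correctness of the firing squad on well-formed segments, it must genuinely forbid $\gamma$ next to a working state in a single step, and it must guarantee that the alternative \emph{$\kappa$ produced by inconsistency} always surfaces a forbidden pattern in the predecessor. By comparison, the firing-squad construction itself and the parity adjustment in the first item are routine.
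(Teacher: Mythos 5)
First, a point of comparison: the paper does not actually prove this proposition --- it is imported verbatim from Kari's Rice-theorem paper (``we will reuse it as it was claimed''), so there is no in-paper proof to measure your attempt against; yours is a from-scratch reconstruction. The architecture you choose (a minimal-time Firing Squad solution augmented with a quiescent firing state $\gamma$ and a spreading error state $\kappa$ output on every ``illegitimate'' neighbourhood) is the standard and correct skeleton, and your treatment of item~1 (a periodic tiling by identical segments calibrated, with a parity/delay gadget, to fire at time exactly $J$) is routine and fine.

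The gap is in the backward induction for item~2. Your dichotomy --- each $\kappa$ of $y$ either comes from a $\kappa$ in the preimage (spreading) or from an inconsistent neighbourhood which ``exhibits a $\gamma$ or partial pre-fire pattern abutting a working state'' --- is false. An inconsistent neighbourhood need not involve $\gamma$ or pre-fire states at all: it can be a $\kappa$-free window of working states that simply occurs in no correct run (for instance two windows borrowed from correct runs of different segments at different times, glued along $2r$ overlapping cells). Such a window is not forbidden in $F(B^\Z)$, and nothing in your argument excludes it from $\Omega_S$; so in the preimage $y^{(1)}$ the material separating the $\gamma$-region from the working region may contain no $\kappa$ at all, and the ``separating block shrinks by $2r$'' invariant is not maintained. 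A second, independent problem arises when the $\gamma$ pulls back to a firing (pre-fire) pattern rather than to $\gamma$'s: your invariant is stated in terms of a $\gamma$ near a working state, and ``pre-fire state adjacent to an ordinary working state'' is \emph{not} locally forbidden in $F(B^\Z)$, since pre-fire states are themselves produced by legitimate working windows and two legitimate windows with different outputs can overlap consistently. What actually forces the conclusion is that any pre-fire or working pattern surviving in $\Omega_S$ must admit preimages of \emph{every} order, i.e.\ an infinite consistent history, whereas a correct Firing Squad run on a finite segment has bounded duration; this genuinely global use of the limit-set hypothesis is the substance of Kari's lemma and is precisely what your one-step forbidden-pattern argument does not capture. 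You flag this as ``the main obstacle,'' and indeed it is: as written, the proof of item~2 does not go through.
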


To prove the undecidability of some property, we need to reduce to it some other property which is already known to be undecidable. It is classical to reduce the nilpotency problem, which was proved undecidable in~\cite{nilpind}. This proof reduced some tiling problem to the nilpotency, but actually, the CA involved all admitted some spreading state. Hence the following stronger result can be derived directly.
\begin{thm}\label{t:nilpind}
 The problem \pb{a cellular automaton $N$ with some spreading state $\theta$}{is $N$ nilpotent} is undecidable.
\end{thm}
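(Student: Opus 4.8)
The plan is to revisit the classical reduction from the tiling problem to nilpotency used in~\cite{nilpind}, and merely to record that the cellular automata it produces already come equipped with a spreading state. The strengthening then costs nothing beyond an inspection of that construction, so the real content is entirely in the cited result.

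First I would recall what the reduction does and fix notation. From a finite instance $\tau$ of the tiling problem one builds a cellular automaton $N_\tau$ whose alphabet consists of the tiles of $\tau$ together with one extra symbol $\theta$. The local rule checks the local compatibility constraints of $\tau$ in the neighborhood of each cell --- horizontal matching within a configuration, and vertical matching encoded across two consecutive steps, which is where the determinism of the CA meets the (deterministic) tiling constraints --- and it sends a cell to $\theta$ as soon as it detects a violation or the impossibility of continuing. The rule is set up so that $\theta$ is spreading: $f(u)=\theta$ whenever $\theta$ occurs in $u$.

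Second I would check the correctness of the reduction at the level of limit sets, recalling that $N_\tau$ is nilpotent, \ie $\Omega_{N_\tau}=\{\dinf\theta\}$, if and only if $\tau$ admits no valid tiling of the plane. Here $\dinf\theta$ is always a fixed point (a spreading state is quiescent), hence always in $\Omega_{N_\tau}$, so nilpotency forces the singleton limit set to be exactly $\{\dinf\theta\}$. If $\tau$ tiles the plane, successive rows of a valid tiling give configurations surviving arbitrarily many steps without ever producing $\theta$, so $\Omega_{N_\tau}$ is strictly larger than $\{\dinf\theta\}$; conversely, a configuration reached arbitrarily late avoids $\theta$ on arbitrarily large space-time windows, and by compactness these windows assemble into a valid tiling of the whole plane. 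Since the tiling problem is undecidable, so is nilpotency.

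Third --- and this is the only point specific to the present, stronger statement --- I would observe that the distinguished symbol $\theta$ of $N_\tau$ is spreading in every instance of the reduction, by the very way the rule kills incompatible cells. Hence the reduction $\tau\mapsto(N_\tau,\theta)$ already lands in the subclass of cellular automata carrying a designated spreading state, and undecidability of the restricted problem follows at once. The only genuine verification is therefore the inspection step, and I expect it to be the sole obstacle: one must confirm that the error state in the construction of~\cite{nilpind} truly spreads --- that it is triggered by, and then propagates from, any occurrence in the neighborhood --- rather than being merely quiescent or absorbing under special neighborhoods. The excerpt asserts that this is indeed the case for that construction, so this last check is the whole of the additional work.
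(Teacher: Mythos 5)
Your proposal matches the paper's own argument exactly: the paper likewise just observes that the classical reduction of a tiling problem to nilpotency in~\cite{nilpind} already produces cellular automata admitting a spreading state, so the restricted problem inherits the undecidability directly. Your additional recap of how the reduction works and why the error state spreads is consistent with that construction, so there is nothing to correct.
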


The restriction to cellular automata with spreading state is very convenient to allow constructions of products of cellular automata, thanks to the following result (see for instance~\cite{trice} for a simple proof).
\begin{prop}\label{p:spreadnilp}
 A cellular automaton $N$ on some alphabet $A$ with some spreading state $\theta\in A$ is nilpotent if and only if $\forall x\in A^\Z,\exists i\in\Z,j\in\N,N^j(x)_i=\theta$.
\end{prop}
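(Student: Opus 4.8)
The plan is to prove the two implications separately, with essentially all of the work in the ``if'' direction.

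For the easy direction, I would assume $N$ nilpotent, so that $N^J$ is constant for some $J\in\N$, say with unique value $c$. Since $N$ commutes with $\sigma$, we have $\sigma(c)=\sigma(N^J(x))=N^J(\sigma(x))=c$, so $c$ is shift-invariant, hence a uniform configuration $\dinf b$. Because $\theta$ is spreading, $f(\theta^{2r+1})=\theta$, so $\dinf\theta$ is a fixed point of $N$; evaluating the constant map $N^J$ at $\dinf\theta$ gives $\dinf\theta=N^J(\dinf\theta)=\dinf b$, forcing $b=\theta$. Thus $N^J(x)=\dinf\theta$ for every $x$, and taking $j=J$ with any $i$ yields the announced property.

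For the converse I would set up a compactness-plus-monotonicity argument producing a \emph{uniform} time bound. The first observation is that spreading makes the event ``$\theta$ sits at the origin'' persistent: once $N^j(x)_0=\theta$, cell $0$ belongs to its own neighborhood, so $N^{j+1}(x)_0=\theta$ as well. Hence the sets $U_j=\{x\in A^\Z\mid N^j(x)_0=\theta\}$, each clopen since $N^j$ is continuous and $\{y\mid y_0=\theta\}$ is a cylinder, form an increasing sequence in $j$. The second observation is that the hypothesis, again combined with spreading, guarantees $\bigcup_{j\in\N}U_j=A^\Z$: for $r\ge1$ a $\theta$ appearing at some cell $i$ and time $j$ propagates to the origin within $\lceil\length i/r\rceil$ further steps, and the degenerate case $r=0$ follows by applying the hypothesis to the uniform configuration $\dinf{(x_0)}$. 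Compactness of $A^\Z$ then extracts a finite subcover, and monotonicity collapses it to a single index $J$ with $U_J=A^\Z$, i.e. $N^J(x)_0=\theta$ for all $x$.

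Finally I would upgrade this origin statement to a global one by shift invariance. Applying $U_J=A^\Z$ to $\sigma^i(x)$ and using $N^J\circ\sigma^i=\sigma^i\circ N^J$ gives $N^J(x)_i=\sigma^i(N^J(x))_0=N^J(\sigma^i(x))_0=\theta$ for every $i\in\Z$, whence $N^J(x)=\dinf\theta$ for all $x$; thus $N^J$ is constant and $N$ is nilpotent. The main obstacle is precisely the extraction of the uniform bound $J$: the hypothesis only provides, for each configuration separately, some a priori unbounded time at which $\theta$ occurs, and it is the conjunction of compactness with the persistence of $\theta$ (the monotonicity of the $U_j$) that converts this pointwise information into the uniform statement needed for nilpotency.
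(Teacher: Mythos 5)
Your proof is correct. The paper does not actually spell out a proof of this proposition (it defers to a reference, calling it ``simple''), and your argument --- persistence of $\theta$ under the spreading hypothesis, the increasing clopen sets $U_j$ covering $A^\Z$, compactness yielding a uniform time $J$, and shift-commutation upgrading the origin to all cells --- is exactly the standard argument being alluded to, including the correct handling of the degenerate radius-$0$ case.
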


\section{Binary simulation}
\label{sec:simul}

The main construction in Kari's proof is based on a simultaneous simulation of several cellular automata thanks to some complex alphabet. In order to keep a fixed alphabet, we now need to encode additionnal information into binary configurations. This can be done thanks to the fact that one of the cellular automata is assumed to be non-surjective. The non-reachable portions of configurations can be used for the complex encodings.

%

\begin{lem}\label{simcs}
Let $C$ be an alphabet, $\Sigma$ a subshift on alphabet $\deux$ distinct from $\deux^\Z$.
Then we can build some strongly freezing language $E\subset\deux^k\setminus\lang_k(\Sigma)$, with $k\in\N\setminus\{0\}$, and some bijection  $\xi:\lang_k(\Sigma)\times C\to E$.
\end{lem}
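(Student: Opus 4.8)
The plan is to reduce everything to producing, for one well-chosen length $k$, a single strongly freezing set $D\subseteq\deux^k$ whose forbidden part $D\setminus\lang_k(\Sigma)$ has at least $\card C\cdot\card{\lang_k(\Sigma)}$ elements. This suffices because strongly freezing is hereditary under inclusion: if $E\subseteq D$ then $E\deux^i\subseteq D\deux^i$ and $\deux^iE\subseteq\deux^iD$ for every $i\in\co1k$, so emptiness of $E\deux^i\cap\deux^iE$ is inherited. One then takes $E\subseteq D\setminus\lang_k(\Sigma)$ of cardinality exactly $\card C\cdot\card{\lang_k(\Sigma)}$ (possible by the size bound) and lets $\xi$ be any bijection $\lang_k(\Sigma)\times C\to E$; by construction $E$ is strongly freezing and $E\subseteq\deux^k\setminus\lang_k(\Sigma)$. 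The hypothesis $\Sigma\ne\deux^\Z$ enters through a forbidden word: being a proper closed shift-invariant set, $\Sigma$ has a nonempty open complement containing a cylinder, which yields a pattern $w\in\deux^m$ occurring in no configuration of $\Sigma$, so $w\notin\lang_m(\Sigma)$.

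For the strongly freezing family I would use a \emph{marker}. Fix $a\ge1$ and let
\[ D=\set{u}{\deux^k}{u\sco0{a+1}=0^a1,\ u_{k-1}=1,\ \forall i\in\co1{k-a+1},\ u\sco i{i+a}\ne0^a}, \]
that is, the words starting with $0^a1$, ending with $1$, and containing the block $0^a$ only as their length-$a$ prefix. To see $D$ is strongly freezing, suppose two words of $D$ overlap with shift $i\in\co1k$, meaning the suffix of some $u\in D$ of length $\ell=k-i$ coincides with the prefix of some $v\in D$ of the same length. If $\ell\le a$, this common block is $0^\ell$, forcing $u$ to end with $\ell\ge1$ zeros and contradicting $u_{k-1}=1$. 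If $\ell>a$, the common block begins with $0^a1$, so $u$ contains $0^a$ starting at position $i\ge1$, contradicting that $0^a$ occurs in $u$ only at position $0$. Hence no overlap exists and $D$ is strongly freezing.

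It then remains to compare cardinalities. The size of $D$ is governed by the number of binary words of length $k-a-1$ avoiding the factor $0^a$ (and ending with $1$), which grows like $\lambda(a)^k$ up to a constant factor, where $\lambda(a)$ is the dominant root of $x^{a+1}-2x^a+1$; one has $\lambda(a)<2$ and $\lambda(a)\to2$ as $a\to\infty$. On the other side, splitting any word of $\lang_{nm}(\Sigma)$ into $n$ blocks of length $m$ and noting that each block, being a factor of a configuration of $\Sigma$, differs from $w$, gives $\card{\lang_{nm}(\Sigma)}\le(2^m-1)^n$; by monotonicity of $k\mapsto\card{\lang_k(\Sigma)}$ this yields $\card{\lang_k(\Sigma)}\le(2^m-1)\,\beta^k$ with $\beta=(2^m-1)^{1/m}<2$. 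Since $\beta<2$, I would first choose $a$ with $\lambda(a)>\beta$, and then $k$ large enough that $\card D$ exceeds $(\card C+1)\card{\lang_k(\Sigma)}$; this gives $\card{D\setminus\lang_k(\Sigma)}\ge\card D-\card{\lang_k(\Sigma)}\ge\card C\cdot\card{\lang_k(\Sigma)}$, as required, and the reduction is complete.

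The verification that $D$ is strongly freezing is routine once the marker $0^a1\cdots1$ is fixed; the real content, and the part I expect to be the main obstacle, is the two-sided counting. One must push the exponential growth rate $\lambda(a)$ of a marker family above the complexity rate $\beta$ of $\Sigma$ — which hinges on the fact that a proper subshift always satisfies $\beta<2$, so that an admissible $a$ exists — and then check that the gap between $\lambda(a)^k$ and $\card{\lang_k(\Sigma)}$ absorbs the constant factor $\card C$ for all large $k$.
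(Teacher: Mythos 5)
Your proof is correct, but it takes a genuinely different route from the paper's. The paper builds $\xi$ \emph{explicitly} by compression: it extends a forbidden word $u$ of $\Sigma$ to a strongly freezing one, keeps the first half of $z$ verbatim, and re-encodes the second half injectively into fewer bits (possible because $\card{\lang_{m\length u}(\Sigma)}\le(2^{\length u}-1)^m$, \ie the entropy deficit created by forbidding $u$), using the space thus gained to append $v$ and to bracket the codeword by $u$ on both sides; freezingness of $E$ is then inherited from that of $u$ together with the absence of $u$ in the verbatim half. You instead argue purely by counting: you exhibit a marker-based strongly freezing family $D$ of exponential growth rate $\lambda(a)\to2$, bound $\card{\lang_k(\Sigma)}$ by $\beta^k$ with $\beta<2$ using exactly the same block-splitting estimate as the paper, and take an arbitrary bijection onto a subset of $D\setminus\lang_k(\Sigma)$ of the right size. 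Both arguments hinge on the same entropy gap, and yours is arguably cleaner as a standalone combinatorial statement (the heredity of freezingness under inclusion and the overlap analysis of $D$ are both sound). What the paper's version buys, and yours silently drops, is structure on the codewords: every word of $E$ begins and ends with the forbidden word $u$, a fact the paper later invokes verbatim (``all words of $E$ contain some occurrence of $u$'' in the proof of Lemma~\ref{l:preinv}). With your $E$ this property still holds in the actual application — there $\Sigma$ is the SFT defined by the single forbidden word $u$ with $u_0=u_{\length u-1}=1$, so $\lang_k(\Sigma)$ is exactly the set of $u$-avoiding words and $E\cap\lang_k(\Sigma)=\emptyset$ forces an occurrence of $u$ — but this extra observation would need to be supplied, and a reader should be warned that the lemma as you prove it delivers strictly less than the paper's construction does. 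Finally, both proofs require enumerating $\lang_k(\Sigma)$ to define the bijection (the paper for $\tilde\xi$, you for choosing $E$ and $\xi$), so neither is more or less effective than the other.
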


\proof 


The basic idea is to use the space outside $\Sigma$ to compress the word of $\lang(\Sigma)$ and make space for the additional information $v\in C$. However, to construct it freezing, we shall compress only the second half on the word.  Let $u$ be a forbidden pattern for $\Sigma$. Should we extend it as stated before, we can suppose that it is strongly freezing. Should we rename the letters, we can suppose that $C\subset\deux^l$, with $l\in\N\setminus\{0\}$.  As a subset of $(\deux^{\length u}\setminus\{u\})^m$, $\lang_{m\length u}(\Sigma)$ has cardinal less than $(2^{\length u}-1)^m$, and admits thus a bijection $\tilde\xi$ from $\lang_{m\length u}(\Sigma)$ onto some subset of $\deux^n$ whenever $2^n\ge(2^{\length u}-1)^m$.  Let us take $m\ge\frac{2\length u+l}{\length u-\log(2^{\length u}-1)}$ and $n= m|u|-2\length u-l$.
 We now take $k=2m|u|$ and define $\xi:(z,v)\mapsto uz\sco0{m\length u}\tilde\xi(z\sco{m\length u}k)vu$ (see Fig.~\ref{fig:encod-strong}) and $E=\xi(\lang_k(\Sigma)\times C)$.

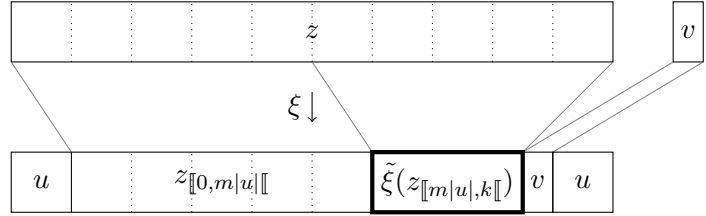
\begin{figure}[htp]
  \centering
  \begin{tikzpicture}[scale=.4]
  \draw (0,0) rectangle (20,2);
  \foreach \i in {2,4,...,18}
    \draw[dotted] (\i,0) -- (\i,2);
  \node at (10,1) {$z$};
  
  \draw (22,0) rectangle (23,2);
  \node at (22.5,1) {$v$};
  
  \draw[->] (10,-1) -- (10,-2);
  \node[anchor=east] at (10,-1.5) {$\xi$};

  \draw[help lines] (0,0) -- (2,-3);
  \draw[help lines] (10,0) -- (12,-3);
  \draw[help lines] (20,0) -- (17,-3);
  \draw[help lines] (22,0) -- (17,-3);
  \draw[help lines] (23,0) -- (18,-3);

  \begin{scope}[yshift=-5cm]
  \draw (0,0) rectangle (12,2);
  \node at (1,1) {$u$};
  \draw (2,0) -- (2,2);
  \foreach \i in {4,6,...,10}
    \draw[dotted] (\i,0) -- (\i,2);
  \node at (7,1) {$z\sco{0}{m|u|}$};
  \draw[ultra thick] (12,0) rectangle (17,2);
  \node at (14.5,1) {$\tilde\xi(z\sco{m\length u}k)$};
  \draw (17,0) rectangle (18,2);
  \node at (17.5,1) {$v$};
  \draw (18,0) rectangle (20,2);
  \node at (19,1) {$u$};
  \end{scope}

  \end{tikzpicture}
  \caption{Encoding into strongly freezing alphabet}
  \label{fig:encod-strong}
\end{figure}

 Now let $w\in E\deux^i\cap\deux^iE$ with $1\le i<k$. Note that $w\sco i{i+\length u}=u$.  But we also have $u=w\sco0{|u|}=w\sco{k-\length u}k$ and $u$ is strongly freezing, so $\length u\le i<k-2\length u$. Moreover, $w\sco{\length u}{(m+1)\length u}$ is in $\lang_{m\length u}(\Sigma)$ and therefore does not contain the pattern $u$. Hence $i>m\length u$.  Similarly, $w\sco{i+\length u}{i+(m+1)\length u}$ cannot contain the pattern $u$, so $k-\length u\notin\co{i+\length u}{i+m\length u}$, \ie $i>k-2\length u$, which gives a contradiction.  \qed 

The language $E$ will then be used as a particular alphabet, over which we can build configurations in $E^\Z$. This full shift can be more or less seen -- up to a short initial shift -- as the system $(\dinf E,\sigma^k)$, but must not be confused with the subshift $(\dinf E,\sigma)$ over $\deux$.
The key point in that construction is that the inclusion of the information of another shift can be done by a constant-space simulation: $\Sigma$ and $C^\Z$ are, in an independent way, factors of respectively $(\dinf E,\sigma)$ and of $E^\Z$ -- or, thanks to freezingness, of $(\dinf E,\sigma^k)$. This could not be done in the absence of any forbidden word $u$.

Given some partial cellular automaton $G$ on some subshift of finite type $\Sigma$, some cellular automata $N$ and $S$ on alphabets $A\ni\theta$ and $B\ni\gamma,\kappa$. Considering $C=A\times B$, we can build $E,k,\xi$ as in Lemma~\ref{simcs}. As a local rule of radius $1$, and with a little abuse of notation corresponding to commuting the products, $\xi$ can be extended to injections $\xi:\lang_{ik}(\Sigma)\times A^i\times B^i\to E^i$. Let $\delta_G$, $\delta_N$ and $\delta_S$ be the local rules of $G$, $N$ and $S$, and assume, without loss of generality, that $S$ and $N$ have same radius $r_S$ and that $G$ has radius $r_G<r_Sk$. 
Define some cellular automaton $\Delta_{G,N,S}$ of radius $r=(r_S+1)k-1$ and local rule $\delta: \deux^{2r+1} \to \deux$ defined as:

\[ \delta(y) = \left| 
  \begin{array}{ll}
    \delta_G(y\scc{r-r_G}{r+r_G}) & \textrm{ if } y\in\lang_{(2r+1)}(\Sigma) \hfill \text{ (1)}\\ \\
    z_i & \textrm{ if } \left\{
      \begin{array}{l}
        y\in\deux^iE^{r_S}\xi(z,v,\gamma)E^{r_S}\deux^{k-1-i} \\
        0\le i<k,z\in\lang_k(\Sigma),v\in A \\
      \end{array}\right. \hfill \text{(2)}\\ \\
    \xi(z,\delta_N(v),\delta_S(w))_i & \textrm{ if } \left\{
      \begin{array}{l}
        y\in\deux^i\xi(z,v,w)\deux^{k-1-i} \\
        0\le i<k,z\in\lang_{(2r_S+1)k}(\Sigma) \\
        v\in A^{r_S}\times A\setminus\{\theta\}\times A^{r_S} \\
        w\in B^{r_S}\times B\setminus\{\gamma,\kappa\}\times B^{r_S} \\
      \end{array}\right. \hfill \text{(3)}\\ \\
    0 & \textrm{ otherwise} \hfill \text{(4)}\\
  \end{array}\right.
\]


This rule is well-defined since the freezingness of $E$ imposes the unicity of $i$ in the cases (2) and (3). Intuitively, the constructed automaton behaves as $G$ on $\Sigma$ (1) and uses the freezing alphabet to simulate both automata $N$ and $S$ while keeping ``compressed'' an element of $\Sigma$ (3). This element is uncompressed when automaton $S$ reaches state $\gamma$ (2). When $N$ reaches state $\theta$, $S$ reaches state $\kappa$ or when the encoding is invalid, the local transition goes to $0$ (4).

Through the end of the section, the cellular automaton $S$ will be a Firing Squad solution as built in Proposition~\ref{p:fsquad}. This will allow to make any configuration of $\Sigma$ appear arbitrarily late during the evolution, since before the synchronization of $S$, the configuration of $\Sigma$ will not be altered.

We will also assume that $\theta$ is a spreading state for the cellular automaton $N$. Intuitively, we wonder if $N$ is nilpotent, and show that we can get an answer if we assume that some property over $\Delta_{G,N,S}$ is decidable.

Finally, we assume that the domain $\Sigma$ of $G$ is the subshift of finite type avoiding a single forbidden pattern $u$ (of length less than $k$) such that $u_0\ne0\ne u_{\length u-1}$. This last property allows that $0^*\lang(\Sigma)\subset\lang(\Sigma)$ and $\lang(\Sigma)0^*\subset\lang(\Sigma)$.

The following lemma shows that the non-encoding patterns will give words in $\Sigma$ after one evolution step.
\begin{lemma}\label{l:preinv0}
Let $x$ be such that $\Delta_{G,N,S}(x)\in[u]_0$. Then there is some $i\in\oc{-k}0$ such that 
$x\in[E^{2r_S+1}]_{i-r_Sk}$.
\end{lemma}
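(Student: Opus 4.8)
The plan is to rephrase the conclusion as a statement about the single neighbourhood centred at $0$, and then to show that this neighbourhood must carry a full encoding block. First I would record the arithmetic behind the radius: since $r=(r_S+1)k-1$, a run of $2r_S+1$ consecutive $E$-words (total length $(2r_S+1)k$) fits inside the length-$(2r+1)$ window $x\scc{-r}{r}$ exactly when its left endpoint lies in $\cc{-(r_S+1)k+1}{-r_Sk}$, and this set of positions is precisely $\sett{i-r_Sk}{i\in\oc{-k}{0}}$. Hence the conclusion ``$x\in[E^{2r_S+1}]_{i-r_Sk}$ for some $i\in\oc{-k}{0}$'' is equivalent to ``$x\scc{-r}{r}$ contains $2r_S+1$ consecutive $E$-words'', which is exactly the structural hypothesis shared by cases (2) and (3) of $\delta$. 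So it suffices to prove that the cell at position $0$ is computed by case (2) or (3).

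Two of the four cases are easy to discard for the cell at $0$. Case (4) is impossible, since it outputs $0$ whereas $\Delta_{G,N,S}(x)_0=u_0\ne0$. The entire difficulty is therefore to rule out case (1); this is also the only place where I expect to use the full strength of the hypothesis $\Delta_{G,N,S}(x)\in[u]_0$, and not merely the value of the single cell at $0$.

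To exclude case (1) I would argue by contradiction, assuming $x\scc{-r}{r}\in\lang_{2r+1}(\Sigma)$, so that this window avoids $u$. The argument splits into three movements. (i) No cell of $\co{0}{\length u}$ is computed by case (2) or (3): such a cell would force its window to contain $2r_S+1$ consecutive $E$-words, and since every word of $E$ begins with $u$, a short computation with $r=(r_S+1)k-1$ places the first of these blocks -- hence an occurrence of $u$ -- inside $x\scc{-r}{r}$, contradicting that this window avoids $u$. In particular the cell at $\length u-1$, which cannot be case (4) because $u_{\length u-1}\ne0$, is also in case (1), so $x\scc{\length u-1-r}{\length u-1+r}$ avoids $u$ as well. (ii) No cell of $\co{0}{\length u}$ is computed by case (4): a case-(4) cell $j$ must see $u$ in its window through a position lying strictly to the right of $\cc{-r}{r}$ (otherwise $x\scc{-r}{r}$ would already contain $u$), but the position bookkeeping shows that any such occurrence falls inside the window of the cell at $\length u-1$, which avoids $u$ -- a contradiction. (iii) Consequently every cell of $\co{0}{\length u}$ is computed by case (1), so by a covering argument the union window $x\sco{-r}{\length u+r}$ avoids $u$. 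Padding it with $0$'s on both sides (using $0^*\lang(\Sigma)\subseteq\lang(\Sigma)$ and $\lang(\Sigma)0^*\subseteq\lang(\Sigma)$) extends it to a configuration of $\Sigma$; applying $G$, which we may assume to map $\Sigma$ into itself, gives $\Delta_{G,N,S}(x)\sco{0}{\length u}\in\lang(\Sigma)$, contradicting that it equals the forbidden word $u$. This excludes case (1) and, by the reduction of the first paragraph, finishes the proof.

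I expect the hard part to be movements (i)--(ii) of the case-(1) exclusion: it is a purely geometric bookkeeping argument in which strong freezingness (every $E$-word carries $u$ at its extremities, so forbidden patterns can only be created at block boundaries) must be combined with the exact radius $r=(r_S+1)k-1$ so that the relevant $E$-word, or the relevant occurrence of $u$, is guaranteed to land inside the neighbouring cells' windows. The closures $0^*\lang(\Sigma)\subseteq\lang(\Sigma)$, $\lang(\Sigma)0^*\subseteq\lang(\Sigma)$ and the invariance of $\Sigma$ under $G$ then make the final ``all-case-(1)'' situation collapse.
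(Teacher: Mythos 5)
Your proof is correct and follows the same skeleton as the paper's: reduce the statement to showing that cell $0$ is computed by case (2) or (3), kill case (4) at cell $0$ via $u_0\ne0$, and then exclude case (1). The only genuine divergence is in how case (1) is excluded. The paper propagates rightward from cell $0$: either all cells of $\co0k$ apply case (1), in which case $\Delta_{G,N,S}(x)\sco0k\in\lang_k(\Sigma)$ cannot begin with $u$, or there is a minimal $i\in\oo0k$ switching to another case, which forces an occurrence of $u$ at the far right of cell $i$'s window, makes all cells of $\co ik$ apply case (4), and yields an image word in $\lang_i(\Sigma)0^{k-i}\subset\lang_k(\Sigma)$ via the $0$-padding closure. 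You instead anchor at both ends of $u$ --- cells $0$ and $\length u-1$ cannot be case (4) because $u_0\ne0\ne u_{\length u-1}$ --- and show that every cell of $\co0{\length u}$ must then be case (1), whence $\Delta_{G,N,S}(x)\sco0{\length u}\in\lang(\Sigma)$, contradicting that it equals $u$. I checked your positional bookkeeping in movements (i)--(ii) (using $r=(r_S+1)k-1$, $\length u<k$ and the fact that every word of $E$ begins with $u$) and it goes through. Your version trades the paper's ``minimal transition index plus $0$-padding'' for this two-anchor argument; both work, and both rely tacitly on $G(\Sigma)\subset\Sigma$, which you rightly flag as an assumption and which does hold in the paper's application, since there $\Sigma$ is chosen to contain $G(\deux^\Z)$.
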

\proof\hfill\begin{itemize}
\item If case (4) of the rule is applied to $x$ in cell $0$, then $\Delta_{G,N,S}(x)_0=0\ne u_0$, hence $\Delta_{G,N,S}(x)\notin[u]_0$.
\item If case (1) is applied to $x$ in all cells of $\co0k$, then $\Delta_{G,N,S}(x)\sco0k\in[\lang_k(\Sigma)]_0$, hence $\Delta_{G,N,S}(x)\notin[u]_0$.
\item If $x$ applies case (1) in cell $0$, but there exists some $i\in\oo0k$ (say minimal) which applies some other rule. This means that the neighborhood $x\scc{i-1-r}{i-1+r}$ is in $\lang_{2r+1}(\Sigma)$ whilst the neighborhood $x\scc{i-r}{i+r}$ is not, \ie $x\in[u]_{i+r-\length u}$. As a result, all cells of $\co ik$ (and many more) will see a non-homogeneous neighborhood and apply (4). $x\sco0k\in\lang_i(\Sigma)0^{k-i}\subset\lang_k(\Sigma)$, hence $\Delta_{G,N,S}(x)\notin[u]_0$.
\item If $x$ applies either (2) or (3) in cell $0$, then we get the result.
\qed\end{itemize}

The following lemma completes the previous one: not only cannot $u$ appear from scratch, but no encoding pattern can appear from a non-encoding pattern.
\begin{lemma}\label{l:preinv}
Let $x$ be such that $\Delta_{G,N,S}(x)\in[E]_0$. Then $x\in[E^{2r_S+1}]_{-r_Sk}$.
\end{lemma}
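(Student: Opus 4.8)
The plan is to show that whenever $\Delta_{G,N,S}(x)$ exhibits an encoding pattern $\xi(z,v,w)\in E$ at position $0$, the preimage $x$ must already carry encoding patterns on the whole neighborhood that the rule $\delta$ reads in order to produce such a pattern. Concretely, I would argue that the only case of the local rule capable of outputting a letter of $E$ is case (3), since cases (1) and (4) output letters lying in $\lang_k(\Sigma)$ or equal to $0$, and case (2) reconstructs a word of $\lang_k(\Sigma)$ (the ``uncompressed'' $z$), all of which are disjoint from $E$ because $E\subset\deux^k\setminus\lang_k(\Sigma)$ by Lemma~\ref{simcs}. The freezingness of $E$ guarantees that the $k$ consecutive output cells forming the pattern of $E$ at position $0$ must all be produced by the same alignment $i$ of case (3), so that $x\sco0k$ and indeed $x\sco{-r_Sk}{(r_S+1)k}$ decomposes as a concatenation of $E$-blocks.

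First I would invoke Lemma~\ref{l:preinv0}: since $\Delta_{G,N,S}(x)\in[E]_0$ and $E\subset\deux^k\setminus\lang_k(\Sigma)$ contains the forbidden pattern $u$ as a factor at a controlled position (recall $\xi(z,v,w)$ begins with $u$), we get $\Delta_{G,N,S}(x)\in[u]_0$ up to a shift, hence $x\in[E^{2r_S+1}]_{i-r_Sk}$ for some $i\in\oc{-k}0$. This already locates a block of $2r_S+1$ consecutive $E$-letters in $x$; it remains to pin down the offset $i$ to be exactly $0$. For this I would examine the $k$ output cells $\co0k$ one by one: each must be the $j$-th coordinate (for the appropriate $j$) of some $\xi(z',\delta_N(v'),\delta_S(w'))$ produced by case (3), reading a neighborhood that by the previous step is an aligned concatenation of $E$-blocks. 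The unicity of the alignment $i$ in case (3) — which the text already notes follows from freezingness — forces all these neighborhoods to share the same block decomposition, so the $E$-block structure in $x$ must start precisely at $-r_Sk$.

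The main obstacle I anticipate is the bookkeeping of offsets: I must rule out the possibility that $x$ carries $E$-blocks at some shifted offset $i\ne0$ that nonetheless happens to produce a genuine $E$-pattern at output position $0$. The strong freezing property of $E$ is exactly the tool that excludes this, since it forbids any overlap $E\deux^i\cap\deux^iE$ for $0<i<k$; I would therefore phrase the argument as: if the output $E$-pattern at $0$ were produced with a nonzero relative offset, the input would simultaneously present $E$-blocks at two incompatible alignments, contradicting freezingness exactly as in the proof of Lemma~\ref{simcs}. Once the offset is fixed at $0$, the conclusion $x\in[E^{2r_S+1}]_{-r_Sk}$ follows immediately, because case (3) of $\delta$ reads precisely the $2r_S+1$ surrounding $E$-blocks to compute each output letter. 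I expect the whole argument to mirror closely the freezingness contradiction already carried out in Lemma~\ref{l:preinv0}, with the extra step being the coordinate-by-coordinate verification that every one of the $k$ output cells is governed by case (3) rather than the degenerate cases.
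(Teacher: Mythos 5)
Your plan follows essentially the same route as the paper's proof: both rest on Lemma~\ref{l:preinv0} (invoked because every word of $E$ begins with the forbidden pattern $u$), on the disjointness $E\cap\lang_k(\Sigma)=\emptyset$, and on the strong freezingness of $E$ to pin down the alignment. The paper merely organises the argument as a dichotomy --- either case (3) fires at some cell of $\co0k$, in which case freezingness of $E$ in the \emph{image} forces the phases to coincide, or it never does, in which case one derives a contradiction --- whereas you get the block structure first and then argue cell by cell.

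There is one step you pass over too quickly: ruling out cases (2) and (4). Once Lemma~\ref{l:preinv0} gives $x\in[E^{2r_S+1}]_{i-r_Sk}$, the cells of the corresponding central block all see a legitimate aligned concatenation of $E$-blocks, and whether they then apply (2), (3) or (4) is decided by the encoded letters $v$ and $w$, not by the shape of the neighborhood; so ``each output cell must be governed by case (3)'' does not follow from the neighborhood being an aligned concatenation of $E$-blocks, which is the only reason you offer at that point. Your earlier justification --- that (2) and (4) produce a word of $\lang_k(\Sigma)$ or $0^k$, both disjoint from $E$ --- settles the matter only when the offset $i$ is already known to be $0$; for $i\ne0$ the $\lang_k(\Sigma)$-block produced at phase $i$ and the observed $E$-block at phase $0$ are misaligned, and disjointness of the two sets of $k$-words says nothing about overlapping occurrences. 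The paper closes exactly this hole: if case (3) never fires, then the output over the $2k$-window $\co{i-k}{i+k}$ lies in $\lang_{2k}(\Sigma)$ (using $0^k\lang_k(\Sigma)\subset\lang_{2k}(\Sigma)$ and the fact that the language of a subshift is closed under taking factors), hence its restriction to $\co0k$ lies in $\lang_k(\Sigma)$, contradicting $\Delta_{G,N,S}(x)\sco0k\in E$. With that supplement your argument is complete; the freezingness step you describe for the case-(3) branch is correct and matches the paper's first bullet.
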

\proof\hfill
\begin{itemize}
\item If case (3) of the rule is applied in some cell $j\in\co0k$, then $\Delta_{G,N,S}(x)\in[E]_{j-i}$ for some $j\in\co0k$. $E$ being freezing, we get $\Delta_{G,N,S}(x)\in[E]_0$.
\item Since all words of $E$ contain some occurrence of $u$, Lemma~\ref{l:preinv0} gives some $i\in\oc{-k}{k-\length u}$ such that $x\sco{i-r_Sk}{i+(r_S+1)k}=\xi(z,v,w)$, for some $z\in\lang_{(2r_S+1)k}(\Sigma),v\in A^{2r_S+1},w\in B^{2r_S+1}$. Assume that $i\ge0$ -- the symmetric case is similar. If the previous point does not occur, then case (2) is applied to cells of $\co i{i+k}$, and either case (2) or case (3) to cells of $\co{i-k}i$. Since $0^k\lang_k(\Sigma)\subset\lang_{2k}(\Sigma)$, both cases imply that $\Delta_{G,N,S}(x)\sco{i-k}{i+k}\in\lang_{2k}(\Sigma)$. This contradicts the assumption that $\Delta_{G,N,S}(x)\sco0k\in E$.
\qed\end{itemize}

To study more in details the limit set of the constructed automaton, let us first 
consider the set \[\Lambda=\bigcup_{\begin{subarray}c0\le i<k\\-\infty\le l<m\le+\infty\end{subarray}}\set x{\deux^\Z}{x\sco{i+lk}{i+mk}\in E^{m-l}\et\forall j\notin\co{i+lk}{i+mk},x_j=0}\] of configurations or pieces of configurations of $\dinf E$ surrounded by $0$. Note that this set does not depend on $G$ -- only on the subshift $\Sigma$.
These partially encoding configurations correspond exactly to those of the limit set of our cellular automaton which are not in $\Sigma$, as proved below.

\begin{lem}\label{l:fire}
 Let $x\in\Omega_{\Delta_{G,N,S}}$ and $i,j\in\Z$ such that $i<j$ and $x\sco i{i+k}=\xi(z_i,(a_i,\gamma)),x\sco j{j+k}=\xi(z_j,(a_j,b_j))\in E$.
Then $b_j\in\{\gamma,\kappa\}$.
\end{lem}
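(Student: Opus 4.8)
We have a cellular automaton $\Delta_{G,N,S}$ built from a Firing Squad $S$ (with spreading state $\kappa$, synchronization state $\gamma$), a CA $N$ with spreading state $\theta$, and $G$ on subshift $\Sigma$.

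The lemma says: if $x$ is in the limit set, and at position $i$ we have an encoding word $\xi(z_i, (a_i, \gamma))$ — meaning the $S$-component is $\gamma$ — and at a later position $j > i$ we have $\xi(z_j, (a_j, b_j))$, then $b_j \in \{\gamma, \kappa\}$.

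**Key facts:**
- $x \in \Omega_{\Delta}$ means $x$ can be reached arbitrarily late, so for any $t$, there's a preimage $x^{(t)}$ with $\Delta^t(x^{(t)}) = x$.
- Lemma l:preinv: if $\Delta(x) \in [E]_0$, then $x \in [E^{2r_S+1}]_{-r_Sk}$. So encoding patterns have encoding preimages.
- The $S$-component at position $i$ being $\gamma$ relates to Proposition p:fsquad: $\Omega_S \cap [\gamma]_0 \subset \{\kappa,\gamma\}^\Z$.

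**The intended claim:** The presence of $\gamma$ at position $i$ means the Firing Squad has "fired" there. By the Firing Squad property, in the limit set, once $\gamma$ appears, everything to the right (up to position $j$) must be in $\{\gamma,\kappa\}$ for the $S$-component.

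Let me think about the proof direction.

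---The plan is to recover the underlying Firing-Squad configuration carried by $x$, to show that it lies in $\Omega_S$, and then to read off the conclusion from the second clause of Proposition~\ref{p:fsquad}. Concretely, I would define a configuration $s\in B^\Z$ by extracting, at each position carrying a block of $E$ in $x$, the $S$-coordinate of that block, and by filling the remaining positions with the spreading state $\kappa$. By hypothesis the block at $i$ contributes $\gamma$ and the block at $j$ contributes $b_j$; hence, once $s\in\Omega_S$ is established, applying Proposition~\ref{p:fsquad}(2) at the (shifted) origin $i$ yields $s\in\{\kappa,\gamma\}^\Z$, and in particular $b_j\in\{\kappa,\gamma\}$, which is exactly the claim.

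To prove $s\in\Omega_S=\bigcap_{T\in\N}S^T(B^\Z)$, I would use that the restriction of $\Delta_{G,N,S}$ to its limit set is surjective, as recalled in Section~\ref{sec:prel}; thus $x$ admits a full backward orbit $(y^{(t)})_{t\in\N}$ inside $\Omega_{\Delta_{G,N,S}}$, with $\Delta_{G,N,S}^t(y^{(t)})=x$. Iterating Lemma~\ref{l:preinv} along this orbit propagates the encoding blocks backward in time, and on every encoding block the only applicable clauses of $\delta$ are (2) and (3), clause~(3) being exactly $\delta_S$ applied to the $S$-coordinate. Extracting the $S$-coordinate of each $y^{(t)}$ then produces configurations $s^{(t)}$ with $S^t(s^{(t)})=s$, so that $s$ belongs to the image of $S^t$ for every $t$, i.e. $s\in\Omega_S$.

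I expect the technical heart of the argument to be the faithfulness of this extraction, together with the behaviour at the boundary of an encoding island. Faithfulness can fail only at cells where clause~(2) (the fired state $\gamma$) or clause~(4) (the crashed state, output $0$) applies rather than clause~(3); but a block can be produced carrying $\gamma$ only through clause~(3), out of a neighbourhood of genuinely working states --- those avoiding $\gamma$, $\kappa$ and $\theta$ --- so along the backward orbit the $S$-coordinates remain in the regime where $\delta$ coincides with $\delta_S$, up to the synchronisation step itself. The remaining difficulty is that each backward application of Lemma~\ref{l:preinv} narrows the controlled window by $r_S$ blocks, so that for a finite encoding island the deep preimages are controlled only near its centre; I would absorb this into the $\kappa$-padding, using that $\kappa$ is spreading to make the exterior of the island behave as the dead region of $S$, and invoke compactness (or the arbitrarily-late firing of Proposition~\ref{p:fsquad}(1)) to pass to a limit configuration in $\Omega_S\cap[\gamma]_0$. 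Proposition~\ref{p:fsquad}(2) then concludes.
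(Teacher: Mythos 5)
Your overall plan --- pull $x$ backward along an orbit in the limit set, use Lemma~\ref{l:preinv} to keep the preimages inside the encoding alphabet, read off the $S$-coordinates and conclude by Proposition~\ref{p:fsquad} --- is the paper's plan, but two of your steps are wrong as stated, and they are precisely the load-bearing ones. First, you have Lemma~\ref{l:preinv} backwards: it says that an $E$-block in the image forces $2r_S+1$ $E$-blocks in the preimage, so going backward in time the controlled window \emph{widens} by $r_S$ blocks on each side at every step; it does not narrow. This widening is the engine of the proof: after finitely many backward steps the windows grown from $i$ and from $j$ must overlap, and strong freezingness of $E$ then forces $j-i\in k\Z$ and yields, in the deep preimage $x^{-t}$, a \emph{single} run of $E$-blocks covering both positions. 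Only then do the extracted $S$-coordinates form one word whose image under $\delta_S^t$ contains both $\gamma$ (at the slot of $i$) and $b_j$ (at the slot of $j$). Your proposal never establishes this merging, and without it the two blocks could a priori belong to unrelated encoding islands whose $S$-coordinates have nothing to do with one another, so no single $S$-orbit relates $\gamma$ and $b_j$.

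Second, the $\kappa$-padded configuration $s$ cannot be shown to lie in $\Omega_S$, and the identity $S^t(s^{(t)})=s$ you want is false in general: the backward extraction controls the $S$-coordinates only on a finite window, and whatever you place outside that window in $s^{(t)}$, the image under $S^t$ is uncontrolled on a band of width about $r_St$ around the island, so it need not equal your $\kappa$-padding. The paper avoids constructing any global $S$-configuration: it only observes that the cylinder $[\gamma B^{j-i-1}b_j]_i$ meets $S^t(B^\Z)$ for every $t$, hence by compactness meets $\Omega_S$, which contradicts $\Omega_S\cap[\gamma]_0\subset\{\kappa,\gamma\}^\Z$ when $b_j\notin\{\gamma,\kappa\}$. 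Your final appeal to compactness points in this direction, but the object to extract is a point of that cylinder in $\Omega_S$, not your padded $s$; reorganized this way (and run as a proof by contradiction), your argument becomes the paper's.
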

\begin{proof}
Assume on the contrary that $b_j\notin\{\gamma,\kappa\}$.
Let $(x^t)_{t\in\Z}$ be a biorbit of $x=x^0$, \ie a bisequence of configurations such that $\forall t\in\Z,\Delta_{G,N,S}(x^t)=x^{t+1}$. By an easy recurrence and Lemma~\ref{l:preinv}, we can see that for any $t\in\N$, $x^{-t}\scc{i-r_Stk}{i+(r_St+1)k}$ can be written $\xi(z^{-t}_{i-r_St},(a^{-t}_{i-r_St},b^{-t}_{i-r_St}))\ldots\xi(z^{-t}_{i+r_St},(a_{i+r_St},b_{i+r_St}))\in E^{(2r_St+1)k}$ and $\delta_S^t(b^{-t}_{i-r_St}\ldots b^{-t}_{i+r_St})=b_i$; similarly, $x^{-t}\scc{j-r_St}{j+r_St}$ can be written $\xi(z^{-t}_{j-r_St},(a^{-t}_{j-r_St},b^{-t}_{j-r_St}))\ldots\xi(z^{-t}_{j+r_St},(a_{j+r_St},b_{j+r_St}))\in(A\times B)^{(2r_Skt+1)k}$, and $\delta_S^t(b^{-t}_{j-r_St}\ldots b^{-t}_{j+r_St})=b_j$.
$E$ being strongly freezing, we can see that $j-i=kl$ for some $l\in\N$, and for any $t>\frac{l-1}{2r_S}$, $x^{-t}\scc{i-r_Skt}{j+r_Skt}$ is in $E^{l-1+2rt}$ and the image $\delta_S^t(b^{-t}_{i-r_St}\ldots b^{-t}_{j+r_St})$ contains $b_i$ and $b_j$. In other words, the cylinder $[b_iB^{j-i-1}b_j]_i$ intersects any of the $S^t(B^\Z)$, and by compactness intersects $\Omega_S$, which contradicts Proposition~\ref{p:fsquad}.
\end{proof}

\begin{lem}\label{l:ssgamma}
$\Omega_{\Delta_{G,N,S}}\subset\Sigma\cup\Lambda$.
\end{lem}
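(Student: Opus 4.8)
The plan is to fix $x\in\Omega_{\Delta_{G,N,S}}$ with $x\notin\Sigma$ and prove $x\in\Lambda$, the case $x\in\Sigma$ being immediate. Since the restriction of $\Delta_{G,N,S}$ to its limit set is its maximal surjective subsystem, I may choose a full biorbit $(x^t)_{t\in\Z}\subset\Omega_{\Delta_{G,N,S}}$ with $x^0=x$. As $x\notin\Sigma$, it contains the forbidden pattern $u$ at some position $p$; the whole argument then amounts to showing that such an occurrence of $u$ necessarily sits inside a single block of $E$-words surrounded by $0$.

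First I would localize the non-$\Sigma$ content inside aligned $E$-words. Applying Lemma~\ref{l:preinv0} to $x=\Delta_{G,N,S}(x^{-1})\in[u]_p$ produces a block $x^{-1}\in[E^{2r_S+1}]_{q}$, and reading one step forward: a cell of $x$ whose neighborhood is a valid encoding lies either in case~(2), whose output is a letter of a word of $\lang(\Sigma)$ and thus cannot carry $u$, or in case~(3), whose output is a letter of an $E$-word. Hence the cells bearing the occurrence of $u$ come from case~(3), so $x$ contains an $E$-word; since by the construction of Lemma~\ref{simcs} every word of $E$ both begins and ends with $u$, occurrences of $u$ and of $E$-words are tightly linked. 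Because the translations applied in Lemmas~\ref{l:preinv0} and~\ref{l:preinv} are multiples of $k$, and because (as shown below) these $E$-words form one contiguous run, they are all aligned at a common offset $c$ modulo $k$, so every non-zero, non-$\Sigma$ letter of $x$ belongs to an $E$-word aligned at offset $c$.

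Next I would prove that these $E$-words fill a single interval $[c+lk,c+mk)$ with $x$ vanishing outside. The motor is the forced growth of $E$-blocks backwards in time: by Lemma~\ref{l:preinv} an $E$-word of $x$ at $q$ compels an $E^{2r_S+1}$ block around $q$ in $x^{-1}$, so inductively each $x^{-t}$ contains the $E$-block enlarged by at least $r_Sk$ on every side. Consequently any bounded gap adjacent to the block is engulfed by the forced block of some $x^{-t}$, and tracing forward a cell can only leave the block by retracting to $0$ through case~(4) or by being uncompressed through case~(2) into a letter of a word of $\lang(\Sigma)$ \emph{contiguous} to what remains of the block; neither mechanism can reproduce, strictly between the block and a separate non-zero region, the zeros carried by the intermediate configurations. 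This excludes both a second $E$-block and any detached non-zero $\Sigma$-part, leaving a single interval. At its ends, a cell whose window meets the block cannot be in case~(1) (that window contains $u$) and is too near an extremity for cases~(2) or~(3), hence falls into case~(4) and outputs $0$; using $0^*\lang(\Sigma)\subset\lang(\Sigma)$ and $\lang(\Sigma)0^*\subset\lang(\Sigma)$ the zeros propagate outward, giving $x_j=0$ for all $j\notin[c+lk,c+mk)$, that is, $x\in\Lambda$.

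The hard part is this single-block-with-zeros step, the delicacy being that a grown $E$-block may legitimately uncompress (case~(2)) into a word of $\lang(\Sigma)$ once its Firing-Squad component reaches $\gamma$, so a naive collision argument does not by itself forbid an $E$-block coexisting with an adjacent or detached $\Sigma$-region. Controlling this requires the Firing-Squad property: by Proposition~\ref{p:fsquad} and Lemma~\ref{l:fire}, as soon as one $E$-word of $x^{-1}$ carries the state $\gamma$, \emph{all} its $E$-words carry a state in $\{\gamma,\kappa\}$, so case~(3) applies nowhere and $x$ would contain no $E$-word at all; since $x$ does contain one, no $\gamma$ is present in the relevant block of $x^{-1}$, every cell there applies case~(3), and the block of $x$ is genuinely an uninterrupted run of $E$-words. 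Combining this firing control with the forced backward growth is exactly what pins the configuration down to $\Sigma\cup\Lambda$.
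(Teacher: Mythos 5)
Your overall strategy is the paper's: reduce to configurations of the limit set containing an $E$-word, use Lemma~\ref{l:preinv} to force backward growth of aligned $E$-blocks, and use Lemma~\ref{l:fire} to make the $\gamma$-uncompression all-or-nothing. You merely run it backwards along a biorbit where the paper runs a forward induction, showing that the patterns of $\sett{v\in E\deux^{nk}\deux^*}{v\sco k{2k}\notin E\et v\sco k{2k+n-1}\ne0^{k+n-1}}$ are forbidden in $\Delta_{G,N,S}^n(\deux^\Z)$. However, your argument has a concrete gap at the step you yourself flag as the hard one. After excluding $\gamma$ via Lemma~\ref{l:fire} you assert that ``every cell there applies case~(3), and the block of $x$ is genuinely an uninterrupted run of $E$-words.'' That is false: case~(3) also fails when the central $A$-component is $\theta$ or the central $B$-component is $\kappa$, and Lemma~\ref{l:fire} says nothing about $\theta$ or $\kappa$. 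Such a cell, sitting in the interior of a full $E$-block of $x^{-1}$, falls into case~(4) and outputs $0^k$, punching a $0^k$-hole in the middle of the run of $E$-words of $x$. A configuration of the form $\cdots EE\,0^k\,EE\cdots$ is in neither $\Sigma$ nor $\Lambda$, so these are exactly the configurations the lemma must exclude, and your proof never addresses them. The paper's induction is built precisely for this: its forbidden pattern explicitly allows ``$E$-word followed by $0^k$'' at level $1$ and then shows, level by level, that the zone of forced zeros after a broken $E$-run grows without bound, so that in the limit nothing non-zero (in particular no further $E$-word) can follow.

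A second, related omission: your exclusion of a detached non-zero $\Sigma$-region rests entirely on the sentence ``neither mechanism can reproduce, strictly between the block and a separate non-zero region, the zeros carried by the intermediate configurations,'' which is a restatement of the goal, not an argument. The actual mechanism, stated in the paper, is quantitative: going backward the $E$-block grows by $r_Sk$ cells per step (Lemma~\ref{l:preinv}), while non-zero case-(1) content, with $0$ quiescent, can only trace back within a cone of slope $r_G$, and $r_G<r_Sk$. It is this speed comparison that forces the backward-grown $E$-block to overtake the backward trace of any non-zero cell at bounded distance, yielding the contradiction (a cell cannot simultaneously lie in an $E$-block and have a neighborhood in $\lang(\Sigma)$). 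Without invoking $r_G<r_Sk$ somewhere, the single-interval-surrounded-by-zeros conclusion does not follow.
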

\proof From Lemma~\ref{l:preinv0}, the image of the subshift which avoids all patterns of $E$ is included in $\Sigma$, which itself is invariant. By shift-invariance, it is thus sufficient to prove that $\Omega_{\Delta_{G,N,S}}\cap[E]_0\subset\Lambda$.
One can remark that the patterns of $\set v{E\deux^k\deux^*}{v\sco k{2k}\notin E\et v\sco k{2k}\ne0^k}$ are forbidden in the image $\Delta_{G,N,S}(\deux^\Z)$. Indeed, if you apply case (3) of the rule in the central cell and (1) in another cell, then between these two cells there will be at least a range of $k$ cells seeing a non-homogeneous neighborhood and applying case (4). Now if you apply case (3) in the central cell and (2) in another cell, this means that you had a configuration which involved simultaneously a state of $A\times(B\setminus\{\gamma,\kappa\}$ and a state of $A\times\{gamma\}$, which contradicts Lemma~\ref{l:fire}.
By induction on $n\ge1$, we can prove that the patterns of $\set v{E\deux^{nk}\deux^*}{v\sco k{2k}\notin E\et v\sco k{2k+n-1}\ne0^{k+n-1}}$ are forbidden in $\Delta_{G,N,S}^n(\deux^\Z)$, since at least the $r_S$ extremal encoding patterns of $E$ disappear at each step, whereas the non-zero patterns of $\Sigma$ can spread only by $r_G<r_Sk$ cells every step. In the limit, we obtain that all configurations of $\Omega_{\Delta_{G,N,S}}$ containing a pattern of $E$ are in $\Lambda$.  \qed

In the case where $N$ is nilpotent, we can see that the second part of the limit set is empty, and therefore we obtain the limit set of the original cellular automaton $G$.
\begin{lem}\label{l:prodnilp}
 If $N$ is nilpotent, then $\Omega_{\Delta_{G,N,S}}=\Omega_G$.
\end{lem}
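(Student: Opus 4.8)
The plan is to prove the two inclusions separately; throughout I write $\Delta$ for $\Delta_{G,N,S}$, and I use $\Omega_G=\bigcap_s G^s(\Sigma)$. The first thing to record is that $\Delta$ acts exactly as $G$ on $\Sigma$: on a configuration avoiding $u$ everywhere, each cell sees a neighborhood in $\lang_{2r+1}(\Sigma)$, so case~(1) is applied everywhere and $\Delta\restr\Sigma=G$. As $\Sigma$ is $G$-invariant, for any $y\in\Omega_G$ and any $s$ there is $w\in\Sigma$ with $G^s(w)=y$ whose whole orbit stays in $\Sigma$, whence $\Delta^s(w)=G^s(w)=y$, so $y\in\Delta^s(\deux^\Z)$ for every $s$ and thus $y\in\Omega_\Delta$. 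This gives $\Omega_G\subset\Omega_\Delta$.

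For the converse the heart of the matter is to use nilpotency of $N$ to forbid any occurrence of an encoding word in the limit set. Let $J$ be such that $N^J=\dinf\theta$ (it exists by Proposition~\ref{p:spreadnilp}). I would prove the key claim $\Delta^{J+1}(\deux^\Z)\cap[E]_0=\emptyset$. Suppose on the contrary that $\Delta^{J+1}(w)\in[E]_0$. Iterating Lemma~\ref{l:preinv} backwards forces a whole cone of zones $\co{-sr_Sk}{(sr_S+1)k}$ to be encoding at time $J+1-s$, for $0\le s\le J+1$; in particular the block $\co0k$ is encoding at every intermediate time and, its preimage being encoding, is produced uniformly by case~(3) at each step (the three other cases output words of $\lang_k(\Sigma)$ or $0^k$, never of $E$). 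Consequently its $A$-component is never $\theta$ and evolves exactly under $\delta_N$, and the same holds for every block of the cone; reading off this component, the central block follows the genuine $N$-orbit, so at time $J$ its value equals $N^J$ of the time-$0$ contents of the cone, namely $\theta$. But re-encoding $\co0k$ through case~(3) at the final step requires this very value to differ from $\theta$ --- a contradiction.

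From the claim, $\Omega_\Delta\cap[E]_0\subset\Delta^{J+1}(\deux^\Z)\cap[E]_0=\emptyset$, and by shift-invariance $\Omega_\Delta$ contains no word of $E$ at all. If $x\in\Omega_\Delta$, I choose a preimage $x'\in\Omega_\Delta$ (possible since $\Delta$ is onto $\Omega_\Delta$); as $x'$ contains no word of $E$, the contrapositive of Lemma~\ref{l:preinv0} shows $\Delta(x')=x$ avoids $u$, i.e. $x\in\Sigma$. Hence $\Omega_\Delta\subset\Sigma$. Finally, taking a backward orbit $(x^{-s})_{s\ge0}$ inside $\Omega_\Delta$, each $x^{-s}\in\Omega_\Delta\subset\Sigma$, so $\Delta$ coincides with $G$ all along it and $x=\Delta^s(x^{-s})=G^s(x^{-s})\in G^s(\Sigma)$ for all $s$, that is $x\in\Omega_G$. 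Together with the first paragraph this yields $\Omega_\Delta=\Omega_G$.

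The step I expect to be delicate is the key claim, and specifically ruling out that the Firing Squad component rescues the encoding. The clean resolution is the cone remark above: remaining encoded for all $J+1$ steps is incompatible with case~(3) ever failing inside the cone --- be it through the spreading $\kappa$, an uncompressing $\gamma$, or the appearance of $\theta$ --- so within the cone the simulation of $N$ is faithful and nilpotency applies with no boundary bookkeeping whatsoever.
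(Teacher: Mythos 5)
Your proof is correct, and the central mechanism is the one the paper uses: pull an encoding block backwards through Lemma~\ref{l:preinv} to build a widening cone of $E$-blocks, observe that only case~(3) of the rule can produce an $E$-block, and conclude that the $A$-component is a faithful, $\theta$-avoiding simulation of $N$, contradicting nilpotency. The packaging, however, differs in two ways worth noting. First, the paper works inside $\Omega_{\Delta_{G,N,S}}$, pulls back arbitrarily many steps, and extracts by \emph{compactness} a configuration whose entire forward orbit is fully encoding, so that Proposition~\ref{p:spreadnilp} applies; you instead fix a uniform nilpotency horizon $J$ with $N^J\equiv\dinf\theta$ and derive a contradiction already at the finite time $J$ (the central block's $A$-value must be $\theta$ there, which blocks case~(3) at the last step). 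This yields the stronger, effective statement $\Delta_{G,N,S}^{J+1}(\deux^\Z)\cap[E]_0=\emptyset$ and avoids compactness, at the price of needing the uniform $J$ --- which follows from the definition of nilpotency rather than from Proposition~\ref{p:spreadnilp} as you cite it, a harmless slip. Second, the paper routes the inclusion $\Omega_{\Delta_{G,N,S}}\subset\Sigma$ through $\Lambda$ and Lemma~\ref{l:ssgamma} and then leaves the identification with $\Omega_G$ implicit; you bypass $\Lambda$ entirely, getting $\Omega_{\Delta_{G,N,S}}\subset\Sigma$ from the contrapositive of Lemma~\ref{l:preinv0} once $E$-words are excluded, and you spell out both inclusions $\Omega_G\subset\Omega_{\Delta_{G,N,S}}$ and $\Omega_{\Delta_{G,N,S}}\subset\Omega_G$ via forward and backward orbits in $\Sigma$. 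Your version is thus somewhat more self-contained and explicit; the paper's is shorter because it reuses Lemma~\ref{l:ssgamma}, which is needed anyway for the non-nilpotent case.
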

\proof
From Lemma~\ref{l:ssgamma} and the fact that $(\Delta_{G,N,S})\restr\Sigma=G\restr\Sigma$, it is sufficient to prove the emptyness of $\Omega_{\Delta_{G,N,S}}\cap\Lambda$.
Let $x\in\Omega_{\Delta_{G,N,S}}\cap[E]_0$ and $J\in\N$. There exists $y^J\in\deux^\Z$ such that $\Delta_{G,N,S}^J(y^J)=x$. Applying inductively Lemma~\ref{l:preinv}, we obtain that $y^J\sco{-Jr_Sk}{(Jr_S+1)k}\in E^{2Jr_S+1}$. By compactness, there is some configuration $y$ such that for any $i\in\Z$ and any $j\in\N$, $F^j(y)\sco{ik}{(i+1)k}\in E$. Clearly, in the successive evolution step from $y$, case (3) of the local rule is always applied, which implies that there is a configuration in $A^\Z$ in the evolution of which $\theta$ never appears, hence contradicting the nilpotency of $N$.
\qed

When $N$ is not nilpotent, we can see that there is a way to let the Firing Squad inject any configurations of $\Sigma$ at any time, hiding the action of $G$: the limit set does not depend on $G$.
\begin{lem}\label{l:sigomeg}
 If $N$ is not nilpotent, then $\Sigma\subset\Omega_{\Delta_{G,N,S}}$.
\end{lem}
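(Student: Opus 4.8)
The plan is to prove that every $x\in\Sigma$ and every $J\in\N$ admit some $y$ with $\Delta_{G,N,S}^J(y)=x$; since $\Omega_{\Delta_{G,N,S}}=\bigcap_{J\in\N}\Delta_{G,N,S}^J(\deux^\Z)$, this yields $x\in\Omega_{\Delta_{G,N,S}}$, hence $\Sigma\subset\Omega_{\Delta_{G,N,S}}$. The two hypotheses provide the building blocks. Because $N$ is not nilpotent, Proposition~\ref{p:spreadnilp} furnishes a configuration $a\in A^\Z$ whose entire forward orbit avoids $\theta$, \ie $N^j(a)_i\ne\theta$ for all $i\in\Z$ and $j\in\N$. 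Because $S$ is a firing-squad solution, Proposition~\ref{p:fsquad}(1) furnishes, for the fixed $J\ge1$ (the case $J=0$ being trivial), a configuration $b\in B^\Z$ with $S^{J-1}(b)=\dinf\gamma$ and $S^j(b)_i\ne\gamma$ for all $i$ and all $j<J-1$.

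First I would assemble the candidate preimage $y\in\dinf E$ block by block, letting the block at position $mk$ hold a piece of the target together with the two auxiliary states: $y\sco{mk}{(m+1)k}=\xi\bigl(x\sco{mk}{(m+1)k},a_m,b_m\bigr)$. This is well-formed because every length-$k$ factor of $x\in\Sigma$ lies in $\lang_k(\Sigma)$, the domain of $\xi$. The core is then an induction on $t\in\cc0{J-1}$ showing that $\Delta_{G,N,S}^t(y)$ remains in $\dinf E$, each block still encoding the same $\Sigma$-piece of $x$ but with $N$-component $N^t(a)$ and $S$-component $S^t(b)$. For $t<J-1$ the central $N$-state is not $\theta$ and the central $S$-state is neither $\gamma$ nor $\kappa$, so case~(3) of $\delta$ fires uniformly: it advances $N$ and $S$ by one step each while leaving the compressed $\Sigma$-piece unchanged, the hypothesis $z\in\lang_{(2r_S+1)k}(\Sigma)$ being satisfied since the relevant window is a factor of $x$. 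At $t=J-1$ the $S$-component equals $\dinf\gamma$ everywhere, so case~(2) applies uniformly and uncompresses, giving $\Delta_{G,N,S}^J(y)\sco{mk}{(m+1)k}=x\sco{mk}{(m+1)k}$ for every $m$, \ie $\Delta_{G,N,S}^J(y)=x$.

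The hard part will be guaranteeing that case~(3) --- and never the default case~(4), which would write a $0$ and wreck the encoding --- is applied at every cell and every intermediate step; this is exactly where both hypotheses are indispensable. Non-nilpotency, through Proposition~\ref{p:spreadnilp}, is precisely what keeps the $N$-track away from $\theta$ for all time, and the firing-squad property keeps the $S$-track away from $\gamma$ until the scheduled time $J-1$. The one genuine subtlety I anticipate concerns the spreading state $\kappa$: Proposition~\ref{p:fsquad}(1) only forbids $\gamma$ before firing, so one must additionally check that the chosen synchronizing orbit of $S$ never produces $\kappa$ either --- otherwise a spurious $\kappa$ would spread and pre-empt case~(3) --- which should follow from the internal structure of the firing-squad solution. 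The remaining points, namely the preservation of the block alignment under $\Delta_{G,N,S}$ and of the $\Sigma$-component under case~(3), are routine given the block structure of $\dinf E$ and the freezingness of $E$.
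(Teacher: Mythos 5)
Your proposal follows essentially the same route as the paper: build the block-encoded preimage $\tilde x$ with $\xi$ from the target $x$, a $\theta$-avoiding orbit of $N$ (via Proposition~\ref{p:spreadnilp}) and a synchronizing orbit of $S$, then show by induction that case~(3) applies uniformly until time $J-1$ and case~(2) uncompresses $x$ at time $J$. The one point you leave open --- that the $S$-track also avoids $\kappa$ before firing --- is real but is closed in the paper without any appeal to the internal structure of the firing-squad construction: since $\kappa$ is spreading, an occurrence of $\kappa$ at some time $j<J-1$ would persist (indeed spread) up to time $J-1$, contradicting $S^{J-1}(z)=\dinf\gamma$; hence $S^j(z)_i\notin\{\gamma,\kappa\}$ for all $j<J-1$ follows formally from the two properties already stated in Proposition~\ref{p:fsquad}. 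With that one-line observation supplied, your argument is complete and matches the paper's.
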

\proof
Let $x\in\Sigma$. From Proposition~\ref{p:spreadnilp}, there is some configuration $y\in A^\Z$ such that for any $i\in\Z$ and any $j\in\N$, $N^j(x)_i\ne\theta$. Let $J\in\N\setminus\{0\}$. From Proposition~\ref{p:fsquad}, there is some configuration $z\in B^\Z$ such that $S^{J-1}(z)=\dinf\gamma$ and for any $j<J-1$ and any $i\in\Z$, $S^j(z)_i\notin\{\gamma,\kappa\}$ (since $\kappa$ is spreading).
Consider now the configuration $\tilde x$ defined by $\forall i\in\Z,\tilde x\sco{ik}{(i+1)k}=\xi(x\sco{ik}{(i+1)k},y_i,z_i)$.
By a quick induction on $j<J$, we can see that for any cell $i\in\Z$, only case (3) of the local rule is used, and $\Delta_{G,N,S}^j(\tilde x)\sco{ik}{(i+1)k}=\xi(x\sco{ik}{(i+1)k},N^j(y)_i,S^j(z)_i)$. At time $J$, since $S^{J-1}(y)=\dinf\gamma$, the second part of the rule is applied and $\Delta_{G,N,S}^J(\tilde x)\sco{ik}{(i+1)k}=x\sco{ik}{(i+1)k}$. As a result, $x\in\bigcap_{J\in\N\setminus\{0\}}\Delta_{G,N,S}^J(\deux^\Z)$.
\qed

\section{Rice Theorem}
\label{sec:Rice}

The construction of the previous section allows us to separate the cases whether $N$ is nilpotent in the same time as we separate properties of the limit set.
\begin{lem}\label{l:mmquiesc}
 For any nontrivial property $\p$ over the limit sets of nonsurjective cellular automata on $\deux$, there exist two cellular automata $G_0,G_1$ on alphabet $\deux$ sharing the same quiescent state $q\in\deux$, and such that $\Omega_{G_0}\in\p$, $\Omega_{G_1}\notin\p$.
\end{lem}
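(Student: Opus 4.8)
The plan is to reduce this to a short combinatorial statement about uniform configurations and quiescent states, exploiting two stable facts: limit sets are preserved under squaring, and the uniform configurations of a binary automaton form a two-element subsystem that can always be ``frozen'' by squaring. Since $\p$ is nontrivial, I would first fix two nonsurjective binary cellular automata $H_+$ and $H_-$ with $\Omega_{H_+}\in\p$ and $\Omega_{H_-}\notin\p$; these are my raw witnesses. The whole difficulty is that $H_+$ and $H_-$ need not share any quiescent state, and I am not allowed to move their limit sets across the boundary of $\p$.

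Next I would record the squaring facts for an arbitrary binary cellular automaton $H$. By the remark $\Omega_F=\Omega_{F^k}$ we have $\Omega_{H^2}=\Omega_H$; moreover $H^2$ is nonsurjective whenever $H$ is, since $H^2(\deux^\Z)\subseteq H(\deux^\Z)$; and $H^2$ always admits a quiescent state, because the map induced by $H$ on the two uniform configurations $\dinf 0,\dinf 1$, once squared, is idempotent and therefore fixes at least one of them. Such a fixed uniform configuration $\dinf q$ is then quiescent for $H^2$, and automatically $\dinf q\in\Omega_{H^2}=\Omega_H$. Passing to $H_+^2$ and $H_-^2$ I may thus assume each witness has a quiescent state; what remains is the real obstacle, namely making a single state quiescent for \emph{both} automata at once.

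The hard part, a common quiescent state, I would resolve by introducing one fixed auxiliary automaton whose entire alphabet consists of quiescent states. Take $W$ to be the radius-$1$ logical-and rule $f(a,b,c)=\min(a,b,c)$: both $0$ and $1$ are quiescent (hence $\dinf 0,\dinf 1\in\Omega_W$), and $W$ is nonsurjective, because the configuration equal to $1$ everywhere except at one cell has no preimage: requiring $W(y)_i=1$ for all $i\neq0$ forces $y$ to be $1$ on a neighbourhood of $0$, whence $W(y)_0=1$ as well, a contradiction. Thus $\Omega_W$ is the limit set of a nonsurjective binary cellular automaton, so it lies on exactly one side of $\p$, and $W$ offers \emph{both} states of $\deux$ as quiescent states to match against whatever quiescent state squaring has produced on the other side.

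Finally I would split on that side. If $\Omega_W\in\p$, I output $G_0=W$ and $G_1=H_-^2$ and let $q$ be any quiescent state of $H_-^2$; then $q\in\deux$ is quiescent for $W$ too, while $\Omega_{G_0}=\Omega_W\in\p$ and $\Omega_{G_1}=\Omega_{H_-}\notin\p$. If instead $\Omega_W\notin\p$, I proceed symmetrically with $G_1=W$, $G_0=H_+^2$, and $q$ a quiescent state of $H_+^2$. In either case $G_0,G_1$ are cellular automata on $\deux$ sharing the quiescent state $q$ and separating $\p$, as required. I expect the only delicate points to be exhibiting the gadget $W$ with full quiescent-state set together with its nonsurjectivity, and the observation that squaring both preserves the limit set and creates a quiescent state; everything else is bookkeeping.
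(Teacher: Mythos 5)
Your proposal is correct and follows essentially the same route as the paper: the paper also takes the minimum (logical-and) cellular automaton as the witness with both states quiescent, and squares the witness on the other side of $\p$ to produce a quiescent state that is then automatically shared. You merely spell out in more detail the nonsurjectivity of the min rule and the reason squaring creates a quiescent state, both of which the paper treats as known remarks.
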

\proof
 Take any nonsurjective cellular automaton $M$ on $\deux$ which has both $0$ and $1$ quiescent (such as a minimum cellular automaton). If its limit set satisfies $\p$, then take some nonsurjective cellular automaton $G$ on $\deux$ whose limit set does not satisfy $\p$. Then $G^2$ has the same limit set and some quiescent state $q\in\deux$. 
If the limit set of $M$ does not satisfy the property $\p$, then we can do the same with some nonsurjective cellular automaton $G$ whose limit set does satisfy $\p$.
\qed

\begin{lem}\label{l:sepa}
Let $G_0,G_1$ be two nonsurjective cellular automata on alphabet $\deux$ sharing the same quiescent state $q\in\deux$, and $N$ a cellular automaton with a spreading state $\theta$.
Then we can build two cellular automata $F_i,i\in\{0,1\}$ such that $\Omega_{F_i}=\Omega_{G_i}$, $i\in\{0,1\}$, if $N$ is nilpotent; $\Omega_{F_0}=\Omega_{F_1}$ otherwise.
\end{lem}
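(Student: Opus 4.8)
The plan is to specialize the construction of Section~\ref{sec:simul} to $G_0$ and $G_1$ \emph{at the same time}, over a single common domain $\Sigma$, and then to read the two regimes off Lemmas~\ref{l:prodnilp}, \ref{l:ssgamma} and \ref{l:sigomeg}. After renaming letters we may assume $q=0$, so that the quiescent state shared by $G_0,G_1$ is exactly the blank state used by the rule $\delta$; by padding radii we may also assume that $G_0,G_1$ share a radius $r_G$, that $N$ and $S$ share a radius $r_S$, and that $r_G<r_Sk$, $S$ being a Firing Squad as in Proposition~\ref{p:fsquad}. The whole point will be that once $\Sigma$ (hence $E$, $\xi$, $k$ and $\Lambda$) is common to both automata, the only freedom left is the $\Sigma$-internal case~(1) of the rule, which is harmless in the non-nilpotent regime.

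The key step is the construction of one subshift of finite type $\Sigma$ serving as domain for \emph{both} $G_0$ and $G_1$. Since each $G_i$ is nonsurjective, $G_i(\deux^\Z)$ admits a forbidden word $w_i$. First I would pick a single word $u$ having both $w_0$ and $w_1$ as factors, so that $u\notin\lang(G_0(\deux^\Z))\cup\lang(G_1(\deux^\Z))$, and then adjust it, by extending and padding with the nonblank letter, so that it is strongly freezing and satisfies $u_0\ne0\ne u_{\length u-1}$; this last word-combinatorial adjustment is routine. Let $\Sigma$ be the subshift avoiding the single pattern $u$. Because $u$ occurs in neither image, $G_i(\Sigma)\subset G_i(\deux^\Z)\subset\Sigma$, so $\Sigma$ is invariant under both maps and $\Omega_{G_i}\subset G_i(\deux^\Z)\subset\Sigma$. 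Applying Lemma~\ref{simcs} to this $\Sigma$ and to $C=A\times B$ (with $\theta\in A$, $\gamma,\kappa\in B$) produces common $k$, $E$ and $\xi$, and I would set $F_i=\Delta_{(G_i)\restr\Sigma,N,S}$ for $i\in\{0,1\}$, using the very same $\Sigma,k,E,\xi,N,S$. The two rules then coincide outside case~(1), where they apply $\delta_{G_0}$ resp.\ $\delta_{G_1}$.

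If $N$ is nilpotent, Lemma~\ref{l:prodnilp} gives $\Omega_{F_i}=\Omega_{(G_i)\restr\Sigma}=\bigcap_j G_i^j(\Sigma)$, and a two-sided inclusion finishes this case: on one hand $\bigcap_j G_i^j(\Sigma)\subset\bigcap_j G_i^j(\deux^\Z)=\Omega_{G_i}$, and on the other hand $\Omega_{G_i}=G_i^j(\Omega_{G_i})\subset G_i^j(\Sigma)$ for every $j$, since $\Omega_{G_i}$ is strongly invariant and contained in $\Sigma$; hence $\Omega_{F_i}=\Omega_{G_i}$. If $N$ is not nilpotent, Lemma~\ref{l:sigomeg} gives $\Sigma\subset\Omega_{F_i}$ and Lemma~\ref{l:ssgamma} gives $\Omega_{F_i}\subset\Sigma\cup\Lambda$, so that $\Omega_{F_i}=\Sigma\cup(\Omega_{F_i}\cap\Lambda)$, and it suffices to prove $\Omega_{F_0}\cap\Lambda=\Omega_{F_1}\cap\Lambda$. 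Here the shared domain pays off: inside a configuration of $\Lambda$ every encoding cell sees a window containing $u$ (every word of $E$ contains $u$), so case~(1) never governs it, while every cell deep inside a surrounding block of $0$'s is sent to $0$ by both rules thanks to the shared quiescence of $q=0$; the only cells where $\delta_{G_0}$ and $\delta_{G_1}$ could disagree are the finitely many transition cells bordering an $E$-block, and the content they produce lies in $\Sigma\subset\Omega_{F_i}$ and is irrelevant to membership in $\Lambda$. Thus the dynamics relevant to $\Omega_{F_i}\cap\Lambda$ is driven by cases~(2)--(4) alone, \ie by $N$, $S$ and the frozen $\Sigma$-content, none of which depends on $G$; the two intersections coincide and $\Omega_{F_0}=\Omega_{F_1}$.

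I expect the main obstacle to be the first step: forcing $G_0$ and $G_1$ onto one common freezing structure. Everything downstream is a direct application of the Section~\ref{sec:simul} lemmas, but those lemmas are stated for a fixed $G$, and the equality $\Omega_{F_0}=\Omega_{F_1}$ in the non-nilpotent regime only follows if $\Sigma$, $E$ and $\Lambda$ are literally the same objects for both automata. That commonality hinges on exhibiting a single forbidden word $u$ valid for both images and simultaneously strongly freezing with nonblank endpoints. The secondary delicate point is the border-cell analysis in the non-nilpotent case, but the forbidden-pattern induction of Lemma~\ref{l:ssgamma} and the injection construction of Lemma~\ref{l:sigomeg} are precisely tailored to absorb it.
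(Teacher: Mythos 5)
Your proposal is correct and follows essentially the same route as the paper: normalize $q=0$, combine forbidden words of the two images into a single forbidden pattern $u$ with nonblank endpoints (the paper simply takes $u=1u^0u^11$), run the $\Delta$ construction over the resulting common $\Sigma$ with the same $N$, $S$, $E$, $\xi$, and conclude via Lemmas~\ref{l:prodnilp}, \ref{l:ssgamma} and \ref{l:sigomeg} together with the observation that the two rules can only differ in case (1), which on $\Lambda$ is neutralized by the shared quiescent state. One small correction: over $\deux$ you cannot make $u$ itself both strongly freezing and satisfy $u_0\ne0\ne u_{\length u-1}$ (a strongly freezing word must have distinct first and last letters), but this requirement is spurious --- only the word used internally in Lemma~\ref{simcs} to build $E$ needs to be strongly freezing, and that lemma performs its own extension.
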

\proof
Should we invert $0$ and $1$ in the construction, we can assume that $q=0$.
Let $u^i$ be a forbidden pattern of the (non-full) shift $G_i(\deux^\Z)$, and consider the word $u=1u^0u^11$. The restrictions $\tilde G_i$ of $G_i$ on the subshift $\Sigma$ forbidding $\{u\}$ are partial cellular automata with the same limit sets than the respective $G_i$.
 Define $F_i=\Delta_{\tilde G_i,N,S}$ with $S$ being as obtained in Proposition~\ref{p:fsquad}. Note that, except in the first case of the locale rule, the definitions of these two cellular automata are equivalent since they are based on the same subshift. If $N$ is not nilpotent, then by Lemmas~\ref{l:ssgamma} and \ref{l:sigomeg}, $\Omega_{\Delta_{\tilde G_i,N,S}}=\Sigma\cup\Omega_{(\Delta_{\tilde G_i,N,S})\restr\Lambda}$. It can be noted that the restrictions of $\Delta_{\tilde G_0,N,S}$ and $\Delta_{\tilde G_1,N,S}$ on $\Lambda$ are equal. Hence $\Omega_{\Delta_{\tilde G_0,N,S}}=\Omega_{\Delta_{\tilde G_1,N,S}}$.
Now if $N$ is nilpotent, Lemma~\ref{l:prodnilp} gives the statement.
\qed

Here is now the main result.
\begin{thm}
Let $\p$ be a property satisfied by the limit set of at least one nonsurjective cellular automaton on $\deux$, but not all. Then the problem \pb{a cellular automaton $F$ on $\deux$}{$\Omega_F\in\p$} is undecidable.
\end{thm}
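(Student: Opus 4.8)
The plan is to reduce the nilpotency problem of Theorem~\ref{t:nilpind} to the membership problem for $\p$, thereby transferring undecidability. The hypothesis on $\p$ is exactly the nontriviality needed to invoke Lemma~\ref{l:mmquiesc}: since $\p$ is satisfied by some but not all limit sets of nonsurjective binary cellular automata, that lemma produces two nonsurjective cellular automata $G_0,G_1$ on $\deux$, sharing a quiescent state $q$, with $\Omega_{G_0}\in\p$ and $\Omega_{G_1}\notin\p$. These two automata depend only on $\p$ and are fixed once and for all.

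Now I consider an arbitrary instance of nilpotency, \ie a cellular automaton $N$ together with a spreading state $\theta$. Feeding $N$, $G_0$ and $G_1$ into Lemma~\ref{l:sepa} (with $S$ the firing squad automaton of Proposition~\ref{p:fsquad}), I obtain two binary cellular automata $F_0,F_1$ enjoying the dichotomy: if $N$ is nilpotent then $\Omega_{F_0}=\Omega_{G_0}$ and $\Omega_{F_1}=\Omega_{G_1}$, whereas if $N$ is not nilpotent then $\Omega_{F_0}=\Omega_{F_1}$. All of this is effective: the construction of $\Delta_{G_i,N,S}$ from the finite local data of $G_i$, $N$ and $S$ is algorithmic, so $F_0$ and $F_1$ are computable from $N$.

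The two cases are now separated by $\p$. If $N$ is nilpotent, then $\Omega_{F_0}=\Omega_{G_0}\in\p$ while $\Omega_{F_1}=\Omega_{G_1}\notin\p$, so exactly one of $F_0,F_1$ has its limit set in $\p$. If $N$ is not nilpotent, then $\Omega_{F_0}=\Omega_{F_1}$, so the two limit sets lie simultaneously inside or outside $\p$. Consequently $N$ is nilpotent if and only if the predicates ``$\Omega_{F_0}\in\p$'' and ``$\Omega_{F_1}\in\p$'' disagree. A decision procedure for the membership problem of the theorem would let me evaluate both predicates and hence decide nilpotency of $N$, contradicting Theorem~\ref{t:nilpind}.

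The whole difficulty has already been discharged into the construction of Section~\ref{sec:simul} and the two preparatory lemmas; the final argument is a clean many-one reduction. The only points to check --- that $G_0,G_1$ are genuinely nonsurjective so that Lemma~\ref{l:sepa} applies, and that the map $N\mapsto(F_0,F_1)$ is total and computable --- are immediate from the statements invoked. The main conceptual obstacle, namely forcing the firing squad to make $\Sigma$ reappear arbitrarily late precisely when $N$ fails to be nilpotent, lies in Lemmas~\ref{l:prodnilp} and~\ref{l:sigomeg}, not here.
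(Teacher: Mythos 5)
Your proposal is correct and follows essentially the same reduction as the paper: fix $G_0,G_1$ via Lemma~\ref{l:mmquiesc}, build $F_0,F_1$ via Lemma~\ref{l:sepa}, and observe that $N$ is nilpotent exactly when the two membership predicates for $\p$ disagree, contradicting Theorem~\ref{t:nilpind} if $\p$ were decidable. Your remark that the disagreement criterion is equivalent to the paper's more specific test ($\Omega_{F_0}\in\p$ and $\Omega_{F_1}\notin\p$) is accurate, since the reversed disagreement cannot occur.
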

\proof
 Assume such a property $\p$ is decidable. Let $G_i$, $i\in\{0,1\}$ be as in Lemma~\ref{l:mmquiesc}. Let us show a procedure to decide whether a given cellular automaton $N$ on alphabet $A$ with spreading state $\theta$ is nilpotent or not, which will contradict Theorem~\ref{t:nilpind}.
We build the two cellular automata $F_i$ as in Lemma~\ref{l:sepa} and we algorithmically check whether their limit sets satisfy property $\p$.
If $\Omega_{F_0}\in\p$ and $\Omega_{F_1}\notin\p$, then $N$ is nilpotent (otherwise the two limit sets would be equal). Otherwise, we know that one $\Omega_{F_i}$ is not equal to $\Omega_{G_i}$, so $N$ is not nilpotent.
\qed
From the decidability of the surjectivity problem, established in~\cite{injdec}, we can rephrase the previous theorem as follows: surjectivity is the only nontrivial property of the limit sets of cellular automata on alphabet $\deux$ to be decidable. Of course, this can be translated to any other fixed alphabet (of at least two letters).

\section{Perspectives}
\label{sec:conc}

This result is a very complete one, since it states that nothing can be said algorithmically with respect to how the long-time configurations look like. In spite of this, concrete examples of properties concerned are not so numerous, except nilpotency or apparition of a given state or pattern.

This is due to the fact that it does not include any dynamical idea. Various results have been obtained in this direction, about some properties of the restriction of the cellular automaton to the limit set~\cite{limind}, the properties of the sequences of states taken by a particular cell~\cite{trice}, or the regularity of the languages obtained this way~\cite{regind}.

Among the properties that are not known to be concerned by our result, an important open problem consists in asking whether stability, which corresponds to the fact that the limit set is reached whithin a finite number of states (and the undecidability of which is not very hard to establish anyway), is a property of the limit sets or not. This issue is linked to the understanding of the different types of limit sets we can get with cellular automata, treated in particular in~\cite{soflim}.

Note that space-time diagrams of cellular automata, which represent the superposition of successive configurations in its application, are two-dimensional subshifts of finite type, \ie drawings defined by some local constraints. Hence our result directly implies some kind of Rice theorem on subshift projections (multidimensional subshifts can be defined similarly).
\begin{cor}
 Let $\p$ be a property satisfied by the limit set of at least one non-surjective cellular automaton on $\deux$, but not all, and $\pi:\deux^{\Z^2}\to\deux^{\Z}$ defined by $\pi((x_{ij})_{i,j\in\Z})=(x_{0j})_{j\in\Z}$. Then the problem \pb{a subshift of finite type $\Sigma\subset\deux^{\Z^2}$}{$\pi(\Sigma)\in\p$} is undecidable.
\end{cor}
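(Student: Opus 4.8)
The plan is to reduce the decision problem of the theorem just proved --- deciding, for a cellular automaton $F$ on $\deux$, whether $\Omega_F\in\p$ --- to the projection problem, by realizing the limit set of $F$ as the projection of the set of space-time diagrams of $F$. Since the hypothesis on $\p$ is identical in both statements, such a reduction immediately transfers the undecidability.

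First I would, from a cellular automaton $F=(\deux,r,f)$, build the two-dimensional subshift of finite type $\Sigma_F\subset\deux^{\Z^2}$ whose configurations are exactly the valid space-time diagrams of $F$: the set of $(x_{ij})_{i,j\in\Z}$ such that $x_{i+1,j}=f(x_{i,j-r},\ldots,x_{i,j+r})$ for all $i,j\in\Z$, with time running along the first coordinate and space along the second. This is indeed an SFT, since the constraint only involves the finitely many cells of a window of height $2$ and width $2r+1$; moreover a finite forbidden language for it can be read off directly from $(\deux,r,f)$, so $F\mapsto\Sigma_F$ is a computable map.

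The core step is to verify that $\pi(\Sigma_F)=\Omega_F$. For $\pi(\Sigma_F)\subset\Omega_F$, if $y=\pi(x)$ for some $x\in\Sigma_F$, then for every $n\in\N$ the row $(x_{-n,j})_{j}$ is an $F^n$-preimage of $y=(x_{0,j})_{j}$, so $y\in F^n(\deux^\Z)$ for all $n$, \ie $y\in\Omega_F$. For the converse $\Omega_F\subset\pi(\Sigma_F)$, I would use that $\Omega_F$ is strongly invariant, $F(\Omega_F)=\Omega_F$ (recalled in Section~\ref{sec:defs}): any $y\in\Omega_F$ thus admits a bi-infinite past orbit $(y^{-t})_{t\in\N}$ inside $\Omega_F$ with $y^0=y$ and $F(y^{-t-1})=y^{-t}$, which I prolong to the future by iterating $F$. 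Stacking these rows, $x_{ij}=(y^{i})_j$, yields a configuration $x\in\Sigma_F$ with $\pi(x)=y$.

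Finally, the equality $\pi(\Sigma_F)=\Omega_F$ gives $\pi(\Sigma_F)\in\p$ if and only if $\Omega_F\in\p$. A decision procedure for the projection problem, fed the computable input $\Sigma_F$, would therefore decide whether $\Omega_F\in\p$, contradicting the previous theorem; note that undecidability already holds on the subclass of space-time-diagram SFTs, hence a fortiori on all SFTs of $\deux^{\Z^2}$. The only delicate point is the equality $\pi(\Sigma_F)=\Omega_F$, and within it the inclusion $\Omega_F\subset\pi(\Sigma_F)$, which hinges on the existence of a bi-infinite past orbit; but this follows at once from the strong invariance of the limit set, so no genuine obstacle remains.
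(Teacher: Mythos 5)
Your proposal is correct and is exactly the argument the paper intends: it proves the corollary only by the remark that space-time diagrams form a two-dimensional SFT whose projection onto the time-zero row is the limit set, which is precisely the reduction you carry out. Your explicit verification of $\pi(\Sigma_F)=\Omega_F$ (in particular the use of $F(\Omega_F)=\Omega_F$ to obtain a bi-infinite past orbit) fills in the details the paper leaves implicit, and is sound.
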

The previous collary can obviously be generalized to any projection defined similarly from some $k$-dimensional tiling to some $q$-dimensional tiling, where $0<q<k$. This does not include the property of being the full shift, but the undecidability of this property was already a consequence of the ``perpendicular'' Rice theorem in~\cite{trice}.

One may also wonder what happens for higher-dimensionnal cellular automata. In this case, our construction seems to extend well. Moreover, surjectivity is also undecidable which makes any non-trivial property undecidable.

Moreover, one can ask the same question on another characteristic set of cellular automata: the ultimate set, containing all the adhering values of orbits, studied for instance in~\cite{nilpeng}. One can notice that, when shifting enough a cellular automaton, the limit set is unchanged but the ultimate set becomes equal to the limit set. Hence, any nontrivial property of limit sets of cellular automata, except being a full shift, is an undecidable property of the ultimate set. We can wonder if it is the case for other nontrivial properties of ultimate sets.

More generally, the Firing Squad can be seen as a very powerful tool to touch the limit set. Our binary simulation can help hide its evolution within any alphabet. This could allow other complex constructions desolidarizing the simulation of a cellular automaton and the structure of its limit set. For instance, could we build an intrinsically universal cellular automaton (\ie that can simulate any other cellular automaton) whose limit set is any given subshift of finite type?


\bibliographystyle{splncs}
\bibliography{ult}

\end{document}